\newcommand{\argmin}{\mathop\textrm{arg~min}\limits}
\newcommand{\otherwise}{\textrm{otherwise}}
\newtheorem{theorem}{Theorem}
\newtheorem{lemma}{Lemma}
\theoremstyle{definition}
\newcommand{\calS}{\mathcal{S}}
\newcommand{\calT}{\mathcal{T}}
\newcommand{\calA}{\mathcal{A}}
\newcommand{\calB}{\mathcal{B}}
\newcommand{\calF}{\mathcal{F}}
\newcommand{\calP}{\mathcal{P}}
\newcommand{\var}[1]{\texttt{#1}}
\newcommand{\macro}[1]{\textit{#1}}
\newcommand{\true}{\textrm{true}}
\newcommand{\false}{\textrm{false}}
\newcommand{\red}{\textrm{red}}
\newcommand{\ra}{\rightarrow}
\newcommand{\vs}{\vspace{1mm}}
\newcommand{\tab}{\hspace{5mm}}
\begin{document}

\title{Self-Stabilizing Construction of a Minimal Weakly $\mathcal{ST}$-Reachable Directed Acyclic Graph\thanks{This work was supported by JSPS KAKENHI Grant Numbers 18K18000, 18K18029, 18K18031, and 20H04140; the Hibi Science Foundation; and Foundation of Public Interest of Tatematsu.\\ \indent\textcopyright\ 2020 IEEE. Personal use of this material is permitted. Permission from IEEE must be obtained for all other uses, in any current or future media, including reprinting/republishing this material for advertising or promotional purposes, creating new collective works, for resale or redistribution to servers or lists, or reuse of any copyrighted component of this work in other works.}}

\author[1]{Junya Nakamura\thanks{Corresponding author: junya[at]imc.tut.ac.jp}}
\author[2]{Masahiro Shibata}
\author[3]{Yuichi Sudo}
\author[4]{Yonghwan Kim}

\affil[1]{Toyohashi University of Technology, Japan}
\affil[2]{Kyushu Institute of Technology, Japan}
\affil[3]{Osaka University, Japan}
\affil[4]{Nagoya Institute of Technology, Japan}

\date{}

\maketitle

\begin{abstract}
We propose a self-stabilizing algorithm to construct a minimal weakly $\mathcal{ST}$-reachable directed acyclic graph (DAG), which is suited for routing messages on wireless networks.
Given an arbitrary, simple, connected, and undirected graph $G=(V, E)$ and two sets of nodes, senders $\mathcal{S} (\subset V)$ and targets $\mathcal{T} (\subset V)$, a directed subgraph $\vec{G}$ of $G$ is a weakly $\mathcal{ST}$-reachable DAG on $G$, if $\vec{G}$ is a DAG and every sender can reach at least one target, and every target is reachable from at least one sender in $\vec{G}$.
We say that a weakly $\mathcal{ST}$-reachable DAG $\vec{G}$ on $G$ is minimal if any proper subgraph of $\vec{G}$ is no longer a weakly $\mathcal{ST}$-reachable DAG. 
This DAG is a relaxed version of the original (or \emph{strongly}) $\mathcal{ST}$-reachable DAG, where every target is reachable from every sender.
This is because a strongly $\mathcal{ST}$-reachable DAG $G$ does not always exist; some graph has no strongly $\mathcal{ST}$-reachable DAG even in the case $|\mathcal{S}|=|\mathcal{T}|=2$.
On the other hand, the proposed algorithm always constructs a weakly $\mathcal{ST}$-reachable DAG for any $|\mathcal{S}|$ and $|\mathcal{T}|$.
Furthermore, the proposed algorithm is self-stabilizing; 
even if the constructed DAG deviates from the reachability requirement by a breakdown or exhausting the battery of a node having an arc in the DAG, this algorithm automatically reconstructs the DAG to satisfy the requirement again.
The convergence time of the algorithm is $O(D)$ asynchronous rounds, where $D$ is the diameter of a given graph.
We conduct small simulations to evaluate the performance of the proposed algorithm.
The simulation result indicates that its execution time decreases when the number of sender nodes or target nodes is large.
\end{abstract}

\section{Introduction}

Nowadays, wireless networks, e.g., Wireless Sensor Networks (WSN) and the Internet of Things (IoT), attract lots of attention in the area of distributed computing.
In a wireless network, generally, each node can communicate only with other nodes within its limited range.
Thus, routing a message from a sender node to a target (destination) node via intermediate nodes plays an important role.
In the literature, many routing algorithms for wireless networks have been proposed \cite{Green2003,Jacquet2001,MaheshKumar2016}.
In the routing task for wireless networks, the following properties are important due to the instability of nodes and their limited power source.
The first is the reachability between sender nodes and target nodes guaranteed by a routing algorithm. 
The second is the number of nodes necessary to participate in the task.
Realizing a reachability guarantee with fewer nodes is preferable because it can reduce the energy consumption of nodes.
The third is fault-tolerance.

For this routing task, Kim et al. proposed a construction algorithm of an \emph{$\calS\calT$-reachable directed acyclic graph (DAG)} \cite{Kim2019} on a wireless network with a set $\calS$ of sender nodes and a set $\calT$ of target nodes. 
This DAG provides reachability from every sender node $s \in \calS$ to every target node $t \in \calT$ with a minimal number of arcs.
However, they also proved in \cite{Kim2019} that constructing an $\calS\calT$-reachable DAG is not always possible.
A graph $G$ and sets $\calS$ and $\calT$ must satisfy a certain condition to have an $\calS \calT$-reachable DAG on $G$ even if we focus on the case $|\calS| \le 2$ and $|\calT| \le 2$.

In order to circumvent this impossibility, in this paper, we consider a weaker version of an $\calS\calT$-reachable DAG, called a \emph{weakly $\calS\calT$-reachable DAG}.
We say that in a directed graph $\vec{G}$, node $v$ is \emph{reachable} from $u$ or equivalently $u$ \emph{can reach} $v$ if there exists a directed path leading from $u$ to $v$ in $\vec{G}$.
A subgraph $\vec{G}$ of $G$ is a weakly $\calS\calT$-reachable DAG if (1) every sender node $s \in \calS$ can reach at least one target node $t \in \calT$, (2) every target node $t \in \calT$ is reachable from at least one sender node $s \in \calS$, and (3) $\vec{G}$ has no cycle.
Unlike an original (or \emph{strongly}) $\calS\calT$-reachable DAG, for any simple, connected, and undirected graph $G=(V,E)$ and two sets $\calS, \calT \subseteq V$, there \emph{always} exists a weakly $\calS\calT$-reachable DAG on $G$, as we prove later in this paper.

We propose a distributed algorithm that constructs a minimal weakly $\calS\calT$-reachable DAG, given a simple, connected, and undirected graph $G=(V, E)$ and two sets $\calS, \calT \subset V$.
The proposed algorithm guarantees the minimality of the constructed DAG $\vec{G}$, like \cite{Kim2019}.
In other words, if any arc is removed from the constructed digraph $\vec{G}$, the resulting digraph is no longer a weakly $\calS\calT$-reachable DAG.
Also, the algorithm is self-stabilizing \cite{Dolev2000}; it tolerates any number of transient failures of nodes.
More specifically, from arbitrary initial configurations, this algorithm eventually reaches a legitimate configuration in which the requirement of minimal weakly $\calS\calT$-reachable DAG is satisfied.
Therefore, even if the constructed DAG deviates from the reachability requirement by a breakdown or exhausting the battery of a node having an arc in the DAG, this algorithm automatically reconstructs a minimal weakly $\calS\calT$-reachable DAG. Then, the reachability is guaranteed again.
The convergence time of the proposed algorithm is $O(D)$ (asynchronous) rounds, and each node requires $O(\log D + \Delta)$ bits memory, where $D$ and $\Delta$ are the diameter and the maximum degree of a given graph $G$, respectively.

To summarize, the contribution of this paper is as follows:
\begin{itemize}
    \item defining a minimal weakly $\calS\calT$-reachable DAG, which is suitable for the routing messages in wireless sensor networks,
    \item proposing a self-stabilizing algorithm that constructs a minimal weakly $\calS\calT$-reachable DAG for any numbers of sender nodes and target nodes, 
    \item proving the correctness and the theoretical performance of the proposed algorithm, and
    \item evaluating and analyzing the performance of the algorithm by simulation.
\end{itemize}

This paper is organized as follows:
Section \ref{sec:related-work} discusses related work that constructs some kinds of DAG between sender nodes and target nodes.
Section \ref{sec:preliminaries} defines our computation model and the construction problem.
Section \ref{sec:proposed-algorithm} proposes a self-stabilizing algorithm to construct an $\calS\calT$-reachable DAG.
Section \ref{sec:correctness} proves the correctness and the theoretical performance of the proposed algorithm.
Section \ref{sec:evaluation} evaluates the performance of the proposed algorithm by simulation.
Finally, Section \ref{sec:conclusion} concludes this paper.

\section{Related Work}
\label{sec:related-work}

\begin{landscape}
\begin{table*}[t!]
	\centering
	\caption{
		Summary of the related DAG construction distributed algorithms.
	}
	\label{tab:related-work}

	\small
	\begin{tabular}{ccccccccc}
		\hline
		& Reachability & $|\calS|$&$|\calT|$& Topology & Arc Assignment$^{*1}$ & Fault-Tolerance & Anonymity & Time Complexity$^{*2}$$^{*3}$ \\
\hline

\cite{Aranha1996}		& Strong	& 1		& 1		& Biconnected	& All		& N/A	& Identified	& $O(n)$	\\
\cite{Chaudhuri2008}	& Strong	& 1		& 1		& Biconnected	& All		& Self-Stabilizing	& Identified	& $O(n \log n)$ \\
\cite{Karaata2002}		& Strong	& 1		& 1		& Biconnected	& All		& Self-Stabilizing  & Identified	& $O(D)$ \\
\cite{Kim2017,Kim2018b}	& Strong	& 1		& 1		& Connected		& Maximal	& Self-Stabilizing	& Semi-anonymous$^{*4}$	& $O(D)$ \\
\cite{Kim2018a}			& Strong	& 1		& 2		& Connected		& Maximal	& Self-Stabilizing	& Anonymous	& $O(D)$ \\
\cite{Kim2018a}			& Weak	& 2		& 2		& Connected		& Maximal	& Self-Stabilizing	& Anonymous	& $O(D)$ \\
\cite{Kim2020}			& Weak	& Any	& Any	& Connected$^{*5}$		& Maximal	& Self-Stabilizing	& Anonymous	& $O(\max(D, \Delta (|\calS| + |\calT|)))$ \\
\cite{Kim2019}			& Strong	& $\le 2$		&  $\le 2$		& Connected$^{*6}$	& Minimal	& Self-Stabilizing	& Semi-anonymous$^{*4}$	& $O(n)$ \\
Our result              & Weak	& Any	& Any	& Connected		& Minimal	& Self-Stabilizing	& Anonymous	& $O(D)$ \\
\hline
	\end{tabular}
	\begin{center}
		\begin{minipage}{180mm}
			\footnotesize
			${}^{*1}$: This column indicates how many edges on a given network are changed to arcs by each algorithm.

			${}^{*2}$: The time complexities are measured in terms of synchronous or asynchronous rounds.%

			${}^{*3}$: In each cell, $n$, $D$, and $\Delta$ are the total number of nodes, the diameter of a given graph, and the maximum degree of the graph.

			${}^{*4}$: All the sender and target nodes have global, unique identifiers, while the other nodes are anonymous.

			${}^{*5}$: The algorithm \cite{Kim2020} has a necessary condition to construct a weakly $\calS\calT$-reachable DAG.

			${}^{*6}$: Each node detects an error if a graph does not satisfy a necessary condition to construct an $\calS\calT$-reachable DAG.
		\end{minipage}
	\end{center}
\end{table*}
\end{landscape}

Table \ref{tab:related-work} summarizes the related algorithms that construct some kinds of DAG from sender nodes to target nodes on a given graph.
The most important aspect of Table \ref{tab:related-work} is reachability.
A DAG with strong reachability ensures that every target node is reachable from every sender node.
On the other hand, a DAG with weak reachability guarantees that every sender can reach at least one target node, and every target node is reachable from at least one sender node; thus, a sender node may not be able to reach some target node.

The first three algorithms \cite{Aranha1996,Chaudhuri2008,Karaata2002} in Table \ref{tab:related-work} construct a \emph{transport net} \cite{Karaata2002}, which ensures strong reachability for a sender node and a target node, on a given biconnected graph.
The construction of a transport net by the algorithms is based on the technique called \emph{st-ordering} \cite{Even1976} (also known as st-numbering), which assigns a number (called st-order or st-number) to each node, and a transport net can easily be constructed from these numbers.
The latter two algorithms \cite{Chaudhuri2008,Karaata2002} are self-stabilizing \cite{Dolev2000}; thus, the algorithms tolerate any number of transient faults.

Kim et al.~considered construction of another type of DAG called \emph{$(\sigma, \tau)$-directed acyclic mixed graph (DAMG)} \cite{Kim2017,Kim2018b,Kim2018a,Kim2020} where $\sigma$ and $\tau$ are the numbers of sender nodes and target nodes, respectively\footnote{Therefore, $\sigma = |\calS|$ and $\tau = |\calT|$.}.
The reachability provided by a DAMG depends on $\sigma$ and $\tau$: strong reachability for $\sigma = 1$ and $\tau = 1$ or $2$, and weak reachability for any $\sigma$ and any $\tau$.
These algorithms cannot assign a direction to every edge since the algorithms construct a DAMG on a given connected graph, unlike a transport net.
Thus, the algorithms ensure that the maximal edges of a constructed DAMG have directions.

Our previous work \cite{Kim2019} introduced a new graph structure called an $\calS\calT$-reachable DAG that ensures strong reachability and presented a self-stabilizing construction algorithm for this graph.
The algorithm focuses on decreasing the number of directed edges (i.e., arcs) of a constructed graph to provide strong reachability as possible as it can, different from the previous algorithms.
Satisfying the reachability requirement with the minimal number of arcs is important for wireless networks.
The reason is as follows.
A node that has an incoming or outgoing arc must always be active for routing messages.
However, such an active node consumes large amounts of power while its energy capacity is limited.
The minimality guarantee of the arcs can reduce the energy consumption of nodes and makes the lifetime of the routing function on wireless networks longer by rerouting the message delivery path that contains an exhausted node automatically with the self-stabilizing algorithm.

In this paper, we propose a self-stabilizing construction algorithm \textsf{MWSTDAG} for a minimal weakly $\calS\calT$-reachable DAG that ensures weak reachability.
By weakening the reachability requirement from an $\calS\calT$-reachable DAG, the algorithm can construct such a DAG for any numbers of sender nodes and target nodes.
The time complexity of the proposed algorithm is $O(D)$, where $D$ is the diameter of a given graph, which is faster than \cite{Kim2020}.

\section{Preliminaries}
\label{sec:preliminaries}

\subsection{Computation Model}
Let $G = (V, E, \calS, \calT)$ be a simple and connected graph where $V$ is the set of $n$ computational entities called \emph{nodes} (or \emph{processes}), and $E$ is the set of $m$ undirected \emph{edges} between nodes.
The graph is anonymous; that is, we do not assume the existence of globally unique identifiers for any nodes.
We call nodes in $\calS$ \emph{senders} and ones in $\calT$ \emph{targets}.
Furthermore, we denote $\calS = \{s_1, s_2, \dots, s_x\}$ and $\calT = \{t_1, t_2, \dots, t_y\}$.
Here, we assume $|\calS|, |\calT| \geq 1$ and $\calS \cap \calT = \emptyset$, i.e., $\calS$ and $\calT$ do not contain the same node\footnote{We made this assumption only for simplicity.
If there exists some node $v \in \calS \cap \calT$, then we can deal with $v$ as if $v$ is a normal node, that is, $v \in V \setminus \{\calS \cup \calT\}$ because $v$ is reachable from at least one sender ($v$ itself) and at least one target ($v$ itself) is reachable from $v$.}.
For any node $v$, we denote the set of $v$'s neighbors by $N(v)$, i.e., $N(v) = \{ u \in V \mid \{u, v\} \in E\}$.
Each node $v$ can distinguish its neighbors with unique local labels $l_v: N(v) \rightarrow \{1,2,\dots,\delta_v\}$, where $\delta_v = |N(v)|$ is the degree of node $v$.
Labels for node $v$ are independent of those of $v$'s neighbors, i.e., we assume nothing on the relation between $l_u(v)$ and $l_v(u)$.
However, we assume that $v$ knows $l_u(v)$ for every neighbor $u \in N(v)$.
For simplicity, we denote $l_v(u)$ by just $u$ in the pseudocode of node $v$.
The graph $G = (V, E, \calS, \calT)$ may be denoted by $G = (V, E)$ if senders and targets are not referred.

In this paper, we employ the state reading model of computation, where each node reads its own variables and those of its neighbors and updates only its own variables in an atomic action.
This model is commonly used with self-stabilizing algorithms.
An algorithm $\calA$ is defined by a set of variables each node has and a set of atomic actions defining how the variables are updated based on the values of its own variables and those of its neighbors.
An atomic action is denoted by the following form: $\langle label \rangle :: \langle guard  \rangle \rightarrow \langle statement \rangle$.
The label is used to identify each action.
The guard is a boolean predicate to specify when the following statement can be executed.
The statement is a sequence of assignments on variables of a node.
We say that an action is \emph{enabled} if the guard of the action is true.
We also say that a node is \emph{enabled} if the node has at least one enabled action; otherwise, the node is said to be \emph{disabled}.

The \emph{state} of a node consists of the values of all variables in the node.
A \emph{configuration} is a vector of states of all nodes.
Let $V'$ be a non-empty subset of $V$ and $\calA$ be an algorithm.
We denote $C \mapsto_{(V', \calA)} C'$ if a configuration $C'$ is obtained when each node in $V'$ performs an atomic action of $\calA$ in configuration $C$.
A \emph{schedule} is an infinite sequence $V_0, V_1, \dots$ of non-empty subsets of $V$.
An execution $\Xi_\calA (S, C_0)$ of algorithm $\calA$ along schedule $S = V_0, V_1, \dots$ starting from a configuration $C_0$ is uniquely defined as the infinite sequence $C_0, C_1, \dots$ of configurations such that $C_i \mapsto_{(V_i, \calA)} C_{i+1}$ for all $i \geq 0$.
We say that a schedule is \emph{(weakly) fair} if each node in $V$ appears infinitely often in the schedule.
We call an execution along a fair schedule a \emph{fair execution}.

\subsection{Self-Stabilization and Silence}
Algorithm $\calA$ is said to be \emph{self-stabilizing} for a problem $\mathcal{P}$ if there exists a set $\calF$ of configurations that satisfies the following three conditions:
\begin{itemize}
	\item \textbf{Convergence:}
		every fair execution of $\calA$ starting from any configuration eventually reaches a configuration in $\calF$.
	\item \textbf{Closure:}
		a configuration in $\calF$ never changes to a configuration out of $\calF$ according to $\calA$, i.e., there do not exist $C \in \calF$, $C' \notin \calF$, and $V' \subseteq V$ such that $C \mapsto_{(V', \calA)} C'$.
\item \textbf{Correctness:}
		each configuration in $\calF$ satisfies the specification of problem $\calP$.
\end{itemize}
A configuration in $\calF$ is called \emph{legitimate}, and a configuration not in $\calF$ is called \emph{illegitimate}.
A configuration $C$ is called \emph{final} if $C \mapsto_{(V', \calA)} C$ holds for any non-empty subset $V'$, and an algorithm $\calA$ is called \emph{silent} if every fair execution reaches a final configuration.

\subsection{Problem Specification}
\label{sec:problem-spec}

Let $G' = (V', E')$ be any (undirected) subgraph of $G$ (note that $G = G'$ may hold).
A directed graph or digraph $\vec{G} = (V'', A)$ is called a directed subgraph of $G'$ if a set $V''$ of nodes satisfies $V'' \subseteq V'$ and a set $A$ of arcs satisfies $(u, v) \in A \Rightarrow \{u, v\} \in E'$ where $(u, v)$ denotes an arc from $u$ to $v$.
A directed subgraph $\vec{G}$ of $G = (V, E, \calS, \calT)$ is called a \emph{weakly $\calS\calT$-reachable DAG} of $G$ if the following conditions hold:
\begin{itemize}
	\item \textbf{C1:} every sender in $\calS$ can reach at least one target in $\calT$ in $\vec{G}$;
	\item \textbf{C2:} every target in $\calT$ is reachable from a sender in $\calS$ in $\vec{G}$; and
	\item \textbf{C3:} there is no directed cycle in $\vec{G}$.
\end{itemize}
Here, for any two nodes $u$ and $v$, we say that $v$ is \emph{reachable} from $u$, or $u$ \emph{can reach} $v$ in $\vec{G}$ if there exists a directed path from $u$ to $v$ in $\vec{G}$.
Also, a weakly $\calS\calT$-reachable DAG $\vec{G}$ of $G$ is called \emph{minimal} if condition C1 or C2 becomes unsatisfied if any arc of $\vec{G}$ is removed.

Each node $v$ is assumed to have \emph{output variables} $v.\var{arc}[u] \in \{\false, \true\}$ for each neighbor $u \in N(v)$.
For each $\{u, v\} \in E$, $v.\var{arc}[u] = \true$ means that arc $(v,u)$ exists in the output digraph.
Specifically, for any configuration $C$, $\vec{G}(C) = (V, A(C))$ is defined as the digraph where the set of nodes is $V$, and the set of arcs is $\{(v,u) \mid \{u,v\} \in E \wedge v.\var{arc}[u]\}$ in configuration $C$.
A configuration $C$ is said to satisfy the specification of the minimal weakly $\calS\calT$-reachable DAG construction if digraph $\vec{G}(C)$ is a minimal weakly $\calS\calT$-reachable DAG of $G = (V, E, \calS, \calT)$.

\subsection{Time Complexity}
We measure \emph{time complexity} as the number of (asynchronous) \emph{rounds} of an execution.
Let $C_0$ be any configuration of $\calA$ and $S = V_0, V_1, \dots$ be any fair schedule.
The \emph{first round} of execution $\Xi = \Xi_\calA (S, C_0) = C_0, C_1, \dots$ is defined as the smallest prefix, say $C_0, C_1, \dots, C_t$ of $\Xi$ such that every enabled node in $C_0$ executes at least one action or becomes disabled by state changes of its neighbor nodes in the first $t$ steps, i.e., $\forall v \in Enabled(C_0), \exists i < t, v \in V_i \vee (v \in Enabled(C_{i-1}) \wedge v \notin Enabled(C_i))$, where $Enabled(C)$ is a set of all the enabled nodes in a configuration $C$.
The second round of $\Xi$ is defined as the first round of $\Xi'$, where $\Xi'$ is the suffix of $\Xi$ starting from $C_t$, that is, $\Xi' = C_t, C_{t+1}, \dots$, and so on.

\subsection{Hierarchical Collateral Composition}
\emph{Hierarchical collateral composition} \cite{Datta2013,Altisen2019} is used to combine self-stabilizing algorithms to build the proposed algorithm.
This composition is a variant of the collateral composition \cite{Herman1992} and can be defined as follows:
Let $\calA$ and $\calB$ be two distributed algorithms.
The hierarchical collateral composition of $\calA$ and $\calB$ is the distributed algorithm $\calB \circ \calA$, where the local algorithm of every node $p$, noted $(\calB \circ \calA)(p)$, is defined as follows:
\begin{itemize}
	\item $(\calB \circ \calA)(p)$ has all variables of $\calA(p)$ and $\calB(p)$.
	\item $(\calB \circ \calA)(p)$ has all actions of $\calA(p)$.
	\item Every action $L_i :: G_i \rightarrow S_i$ of $\calB(p)$ is rewritten in $(\calB \circ \calA)(p)$ as the action $L_i :: \neg C_p \wedge G_i \rightarrow S_i$ of $\calB(p)$, where $C_p$ is the disjunction of all guards of all actions in $\calA(p)$.
\end{itemize}
Roughly speaking, the hierarchical collateral composition assigns explicit priorities to actions of the original distributed algorithms $\calA$ and $\calB$, that is, any actions of the high layer algorithm $\calB$ are not allowed to be enabled until every action of the low layer algorithm $\calA$ becomes disabled.
Such priorities allow us to avoid problems caused by the nondeterminism of enabled actions and to achieve an efficient composite algorithm, as demonstrated in \cite{Devismes2016}.

\section{Proposed Algorithm}
\label{sec:proposed-algorithm}

In this section, we propose a self-stabilizing algorithm called \textsf{MWSTDAG} that constructs a minimal weakly $\calS\calT$-reachable DAG on a given graph $G = (V, E, \calS, \calT)$.
This algorithm is built by the hierarchical collateral composition \cite{Datta2013} and has four layers.
The first layer algorithm \textsf{L1SpanningForest} builds Breadth-First-Search (BFS) trees on a given network $G$ and checks reachability from sender nodes to target nodes on the trees.
The second layer algorithm \textsf{L2SpanningForest} builds another kind of BFS trees to ensure reachability to target nodes from the sender nodes that cannot reach any target node in the layer 1 trees.
The third layer algorithm \textsf{L3WSTDAG} constructs a (possibly non-minimal) weakly $\calS\calT$-reachable DAG based on the trees constructed in the first and second layers.
The final layer algorithm \textsf{L4RedundantArcRemoval} detects and removes redundant arcs in the DAG to guarantee the minimality of the generated weakly $\calS\calT$-reachable DAG.
Thus, $\textsf{MWSTDAG} = \textsf{L4RedundantArcRemoval} \circ \textsf{L3WSTDAG} \circ \textsf{L2SpanningForest} \circ \textsf{L1SpanningForest}$.
Hereafter, we call a BFS tree constructed in layer 1 (resp.~layer 2), an \emph{L1 tree} (resp.~an \emph{L2 tree}).

In the proposed algorithm, each node may have \emph{red} or \emph{blue} color.
The red color assigned in layer 1 indicates that the node can reach a target node by tracing an L1 tree.
The blue color assigned in layer 2 indicates that the node can reach a red node.
Thus, a blue node can reach a target node through the red node (if the configuration is legitimate for layers 1 and 2).
These colors are propagated from a lower node to a higher node in their L1 and L2 trees.
Note that these color assignments are conducted only in layers 1 and 2, and the other layers never change node colors.

The removal of redundant arcs in layer 4 plays an important role in guaranteeing the minimality of a constructed weakly $\calS\calT$-reachable DAG and is conducted with the following idea:
If a red node $v$ has at least one incoming arc from a blue node, we can remove all but one arc from red nodes to $v$ without violating the reachability requirement of a weakly $\calS\calT$-reachable DAG.
However, there is an exception.
If a sender node becomes unreachable to any target node because of the removal, the algorithm must not remove such an arc.
Because only a red node $u$ having two or more outgoing arcs to red nodes can detect such an arc, we propagate information about whether there is a red node having an incoming arc from a blue node from $u$'s descendant nodes to $u$ in an L1 tree.
Note that the discussion above is only for a red node; none of the blue nodes have any redundant arc in the algorithm.

Algorithms \ref{alg:l1algorithm}--\ref{alg:l4algorithm} are the pseudocodes of each layer algorithm.
To avoid two or more actions of the same layer becoming enabled, priorities to actions are assigned as follows: an action that appears earlier in each pseudocode has higher priority.
If guards of two or more actions are true, only the action with the highest priority among them becomes enabled.
Moreover, since we assume the hierarchical collateral composition, actions of a layer never become enabled until all actions in every lower layer are disabled.
Therefore, at most one action is enabled in a node during an execution of the proposed algorithm.

A node $v$ has two variables that represent two DAGs: $v.\var{l3\_arc}$ for a (possibly non-minimal) weakly $\calS\calT$-reachable DAG in layer 3, and $v.\var{arc}$ for a minimal weakly $\calS\calT$-reachable DAG as the output of the algorithm.
A node $v$ also has two variables representing its color: $v.\var{l1\_color}$ for red and $v.\var{l2\_color}$ for blue.

\begin{figure}[tp]
	\centering
	\includegraphics[scale=0.425]{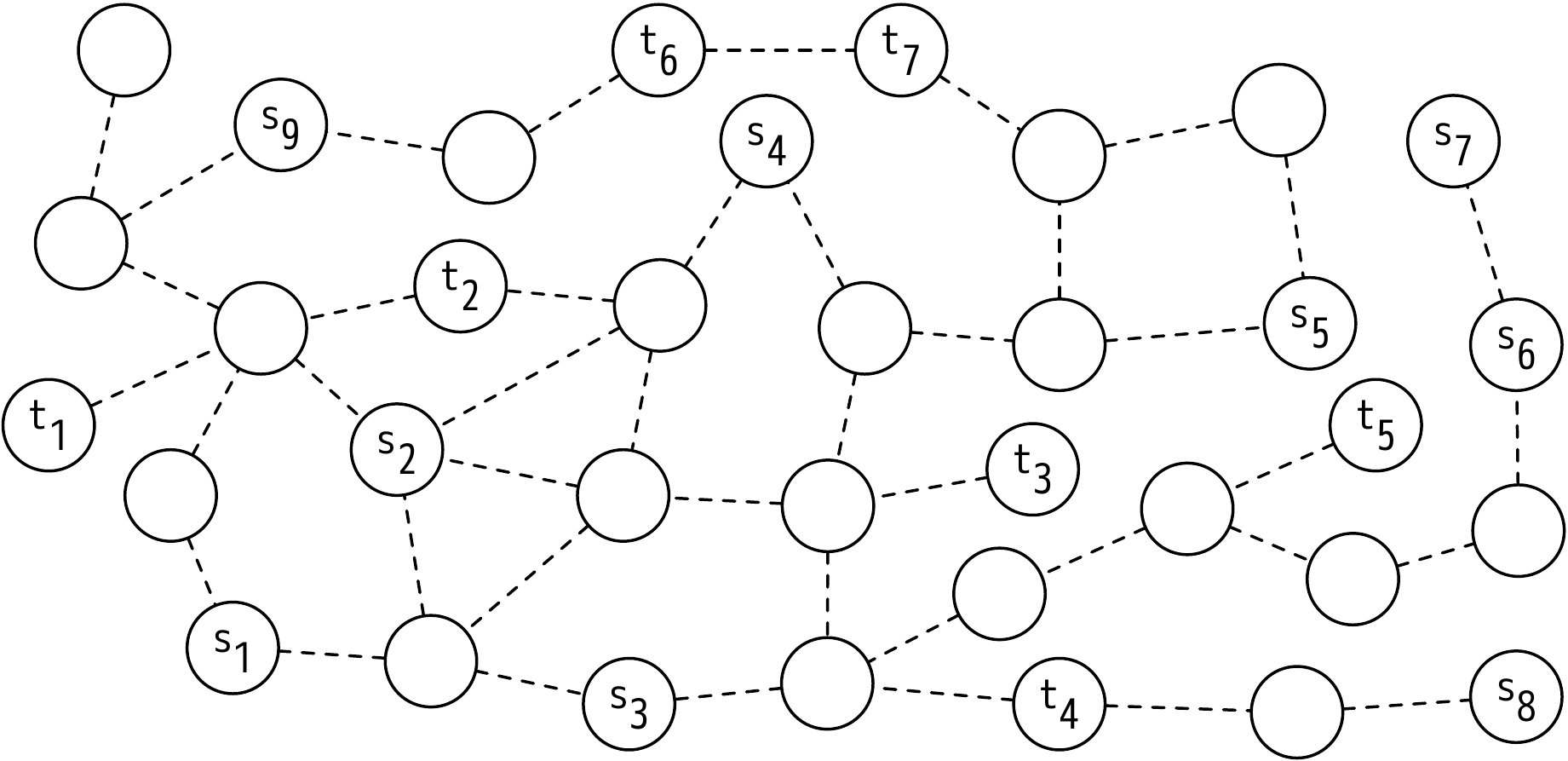}
	\caption{A given example graph $G = (V, E, \calS, \calT)$ where $\calS = \{ s_1, s_2, \dots, s_9 \}$ and $\calT = \{t_1, t_2, \dots, t_7 \}$.
		A dashed line represents an edge $e \in E$.}
	\label{fig:example-original}
\end{figure}
The followings are brief explanations of each layer.
Figures \ref{fig:example-original}--\ref{fig:example-finish} illustrate an example execution of the algorithm:
Fig.~\ref{fig:example-original} is a given undirected graph $G$, Figs.~\ref{fig:example-layer1}--\ref{fig:example-layer4} are the legitimate configurations for each layer, and Fig.~\ref{fig:example-finish} is the constructed minimal weakly $\calS\calT$-reachable DAG $\vec{G}$.

\textbf{Layer 1 (Fig.~\ref{fig:example-layer1}):}
\begin{figure}[tp]
	\centering
	\includegraphics[scale=0.425]{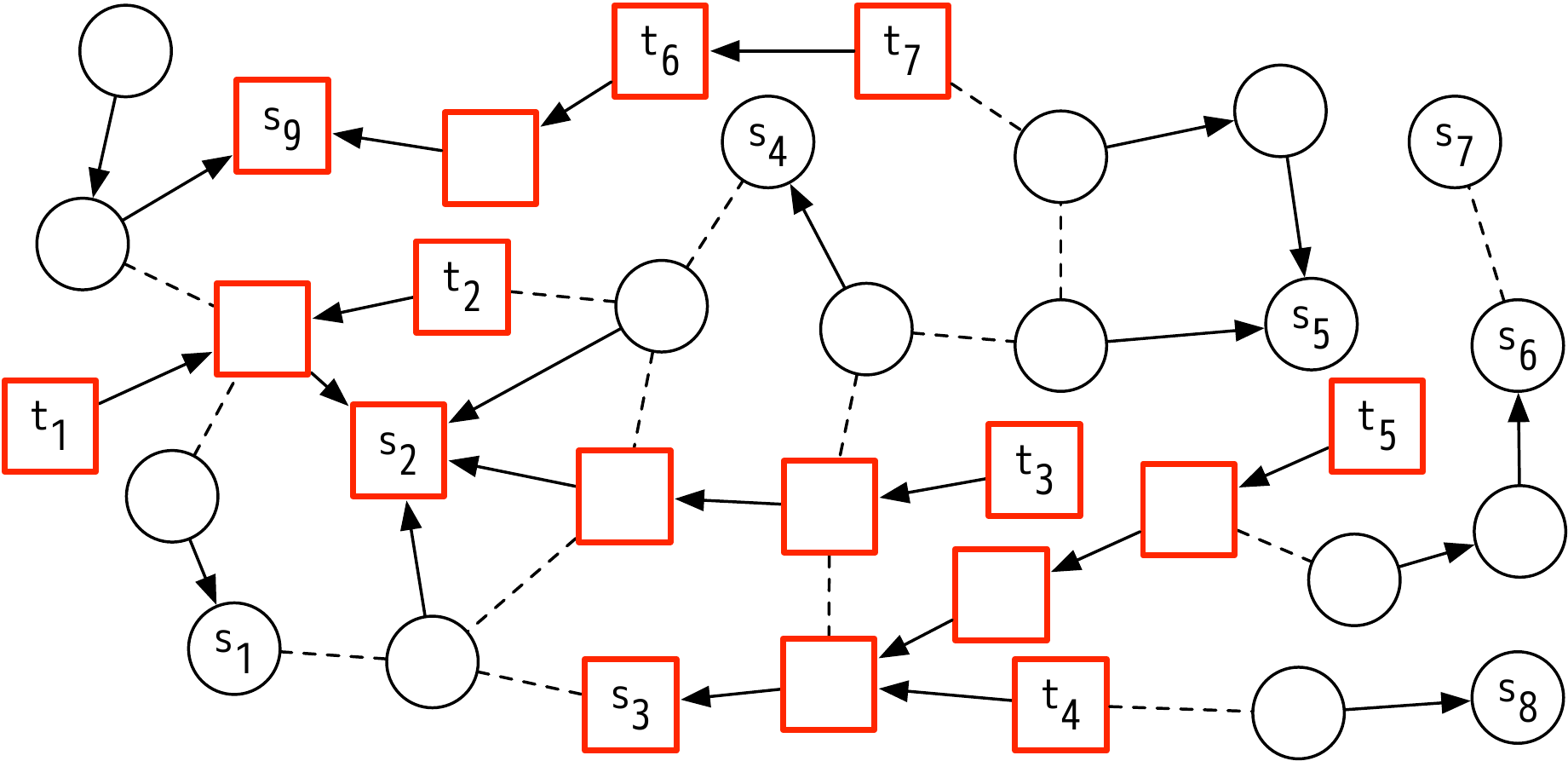}
	\caption{A legitimate configuration of Layer 1.
		Solid arrows represent the generated L1 trees rooted at sender nodes.
		A red square is a red node.
	}
	\label{fig:example-layer1}
\end{figure}
\begin{algorithm}[tp]
\caption{\textsf{L1SpanningForest} for node $v \in V$}
\label{alg:l1algorithm}
\small

\textbf{Input}:\vs

\begin{tabular}{l}
$\calS$: A set of sender nodes \\
$\calT$: A set of target nodes \\
\end{tabular}

\vs
\textbf{Variable}:\vs

\begin{tabular}{l}
$v.\var{l1\_dist} \in \mathbb{N}$: Distance to the root of its L1 tree\\
$v.\var{l1\_parent} \in N(v) \cup \{v\}$: The parent node of its L1 tree\\
$v.\var{l1\_color} \in \{\false, \true\}$: {\true} if $v$ is a red node\\
\end{tabular}

\vs
\textbf{Macro}:\vs

\begin{tabular}{l}
$\macro{NearestL1ParentDist}(v) = \min_{u \in N(v)} u.\var{l1\_dist}$ \\
$\macro{L1CorrectDist}(v) = \left\{ \begin{array}{ll}
	0 & v \in \calS \\
	\macro{NearestL1ParentDist}(v)+1 & \otherwise \\
\end{array} \right.$ \\

$\macro{NearestL1Parent}(v) = \min\left( \argmin_{u \in N(v)} u.\var{l1\_dist} \right)$ \\
$\macro{L1CorrectParent}(v) = \left\{ \begin{array}{ll}
	v & v \in \calS \\
	\macro{NearestL1Parent}(v) & \otherwise \\
\end{array} \right.$ \\

$\macro{RedChild}(v) = $\\
\tab\tab $\{ u \mid u \in N(v) \wedge u.\var{l1\_parent} = v \wedge u.\var{l1\_color} \}$ \\
$\macro{IsRed}(v) = v \in \calT \vee RedChild(v) \neq \emptyset$ \\

\end{tabular}

\vs
\textbf{Action}:\vs

\begin{tabular}{lll}
L1FixDist &::& $v.\var{l1\_dist} \neq \macro{L1CorrectDist}(v)$ \\
&$\ra$& $v.\var{l1\_dist} \gets \macro{L1CorrectDist}(v)$ \\
L1FixParent &::&  $v.\var{l1\_parent} \neq \macro{L1CorrectParent}(v)$ \\
&$\ra$& $v.\var{l1\_parent} \gets \macro{L1CorrectParent}(v)$ \\
L1FixColor &::& $v.\var{l1\_color} \neq \macro{IsRed}(v)$ \\
&$\ra$& $v.\var{l1\_color} \gets \macro{IsRed}(v)$ \\
\end{tabular}

\end{algorithm}
Each sender $s \in \calS$ builds a Breadth-First-Search (BFS) tree on a given graph $G$.
Each non-sender node $v$ joins the tree of its nearest sender.
If there are two or more such trees, $v$ chooses one of the trees to which it can connect through the minimum-label edge.
The construction is done by L1FixDist and L1FixParent actions, and these trees form a BFS spanning forest on graph $G$.
In this layer, each target node $t \in \calT$ changes its color to red by L1FixColor action.
The color change propagates from the target node towards the root (sender) $s \in \calS$ of the tree to tell that there is a target node in $s$'s tree.
L1 trees ensure that every target node $t \in \calT$ is reachable from a sender node by tracing the trees from their root nodes.
Note that an L1 tree whose root node is not red is ignored after this layer.

\textbf{Layer 2 (Fig.~\ref{fig:example-layer2}):}
\begin{figure}[tp]
	\centering
	\includegraphics[scale=0.425]{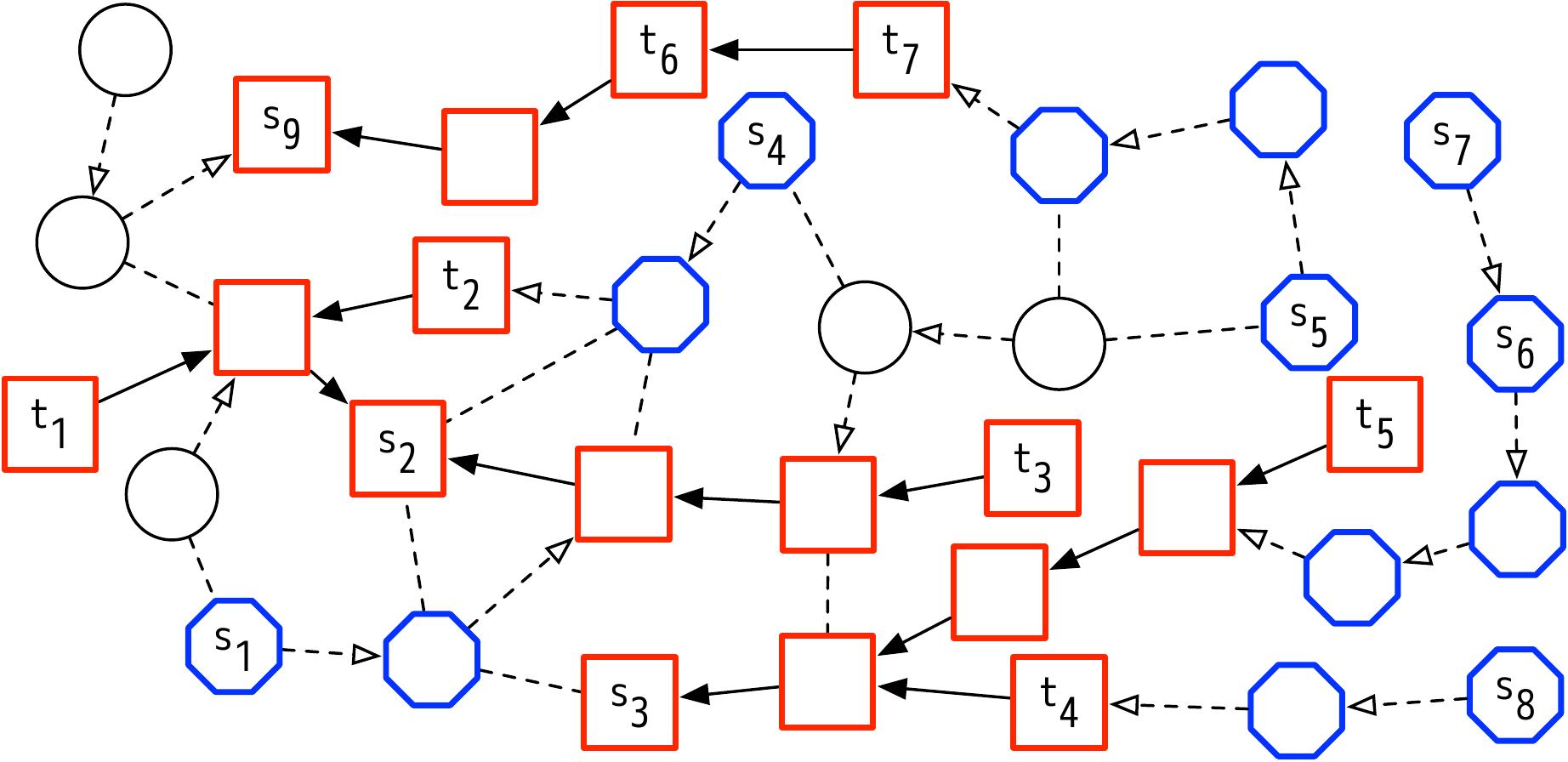}
	\caption{
		A legitimate configuration of layer 2.
		Dashed arrows represent the constructed L2 trees rooted at red nodes.
		A blue octagon is a blue node.
		For simplicity, any arrow of an L1 tree rooted at a colorless sender node is omitted.
	}
	\label{fig:example-layer2}
\end{figure}
\begin{algorithm}[tp]
\caption{\textsf{L2SpanningForest} for node $v \in V$}
\label{alg:l2algorithm}
\small

\textbf{Variable}:\vs

\begin{tabular}{l}
$v.\var{l2\_dist} \in \mathbb{N}$: Distance to the (red) root of L2 tree\\
$v.\var{l2\_parent} \in N(v) \cup \{v\}$: The parent node of its L2 tree\\
$v.\var{l2\_color} \in \{\false, \true\}$: {\true} if $v$ is a blue node\\
\end{tabular}

\vs
\textbf{Macro}:\vs

\begin{tabular}{lll}

$\macro{NearestL2ParentDist}(v) = \min_{u \in N(v)} u.\var{l2\_dist}$ \\
$\macro{L2CorrectDist}(v) = \left\{ \begin{array}{l}
	0 \hfill v.\var{l1\_color} = \red \\
	\macro{NearestL2ParentDist}(v)+1 \tab \otherwise \\
\end{array} \right.$ \\

$\macro{NearestL2Parent}(v) = \min\left( \argmin_{u \in N(v)} u.\var{l2\_dist} \right)$ \\
$\macro{L2CorrectParent}(v) = \left\{ \begin{array}{l}
	v \hfill v.\var{l1\_color} = \red \\
	\macro{NearestL2Parent}(v) \tab \otherwise \\
\end{array} \right.$ \\

$\macro{BlueChild}(v) = $ \\
\tab\tab $\{ u \mid u \in N(v) \wedge u.\var{l2\_parent} = v \wedge u.\var{l2\_color} \}$ \\
$\macro{IsBlue}(v) = \neg v.\var{l1\_color} \wedge (v \in \calS \vee \macro{BlueChild}(v) \neq \emptyset)$ \\

\end{tabular}

\vs
\textbf{Action}:\vs

\begin{tabular}{lll}
L2FixDist &::& $v.\var{l2\_dist} \neq \macro{L2CorrectDist}(v)$ \\
&$\ra$& $v.\var{l2\_dist} \gets \macro{L2CorrectDist}(v)$ \\
L2FixParent &::& $v.\var{l2\_parent} \neq \macro{L2CorrectParent}(v)$ \\
&$\ra$& $v.\var{l2\_parent} \gets \macro{L2CorrectParent}(v)$ \\
L2FixColor &::& $v.\var{l2\_color} \neq \macro{IsBlue}(v)$ \\
&$\ra$& $v.\var{l2\_color} \gets \macro{IsBlue}(v)$ \\
\end{tabular}

\end{algorithm}
Each red node builds another kind of BFS tree on $G$ by L2FixDist and L2FixParent actions to guarantee every color-less sender node, e.g., $s_1$ and $s_7$ in Fig.~\ref{fig:example-layer1}, can reach a red node.
If a node has two or more parent candidates, the node chooses one of them in the same way as for layer 1.
Such a colorless sender node also changes its color to blue by L2FixColor action, and this change propagates from a child to its parent on the L2 trees.
L2 trees ensure that every blue sender node can reach a red node through blue nodes.

\textbf{Layer 3 (Fig.~\ref{fig:example-layer3}):}
\begin{figure}[tp]
	\centering
	\includegraphics[scale=0.425]{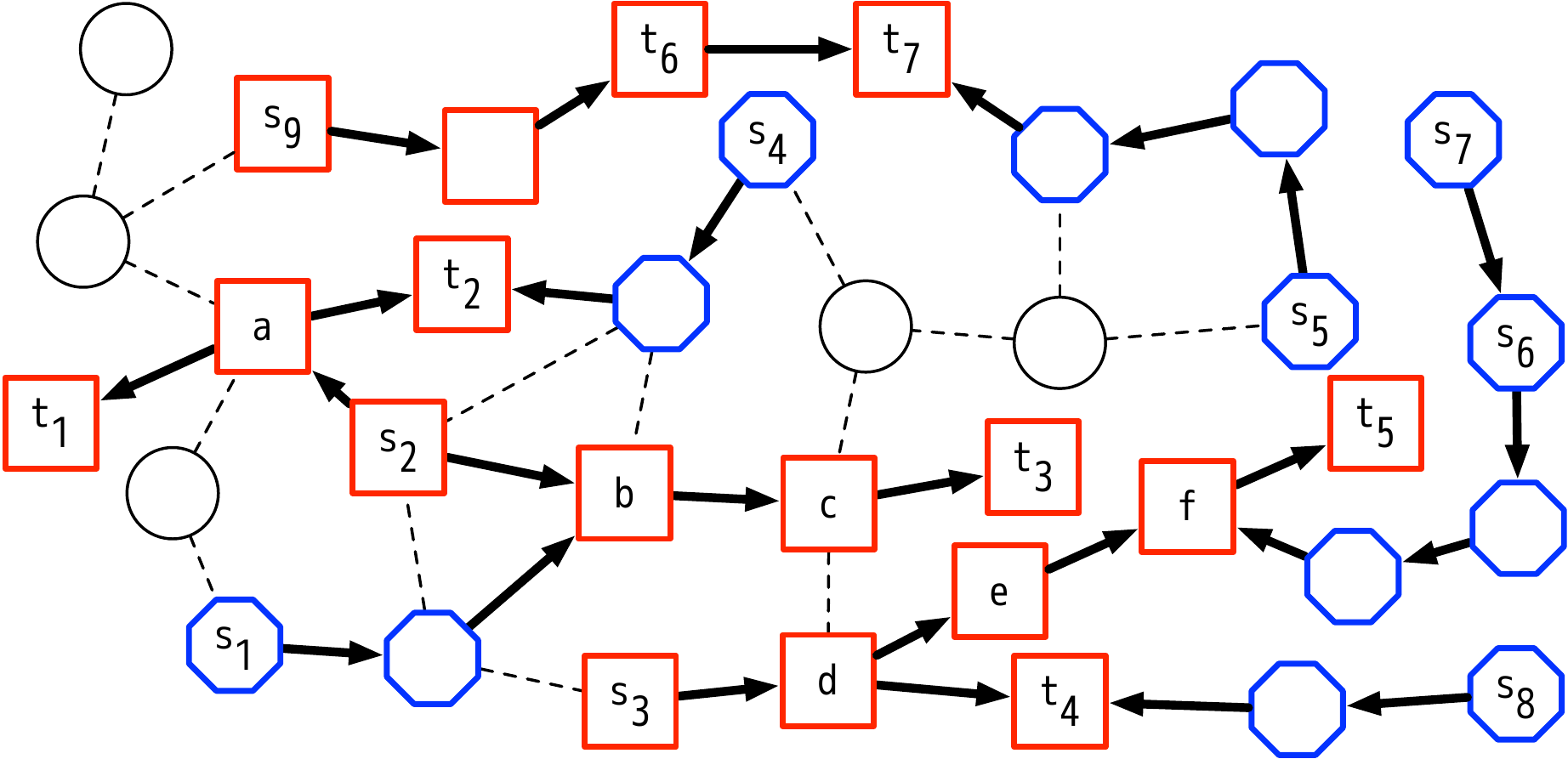}
	\caption{
		A legitimate configuration of layer 3.
		Black bold arrows represent the constructed (possibly non-minimal) weakly $\calS\calT$-reachable DAG.
		For simplicity, any dashed arrow of an L2 tree from a colorless node is omitted.
	}
	\label{fig:example-layer3}
\end{figure}
\begin{algorithm*}[tp]
	\caption{\textsf{L3WSTDAG} for node $v \in V$}
\label{alg:l3algorithm}
\small

\textbf{Variable}:\vs

\begin{tabular}{lll}
$v.\var{l3\_arc}[u] \in \{\false, \true\}$ & : & An arc of (possibly non-minimal) weakly $\calS\calT$-reachable DAG \\
\end{tabular}

\vs
\textbf{Macro}:\vs

\begin{tabular}{lll}
$\macro{HasL3Arc}(v, u)$ & = & $(v.\var{l1\_color} \wedge u \in \macro{RedChild}(v)) \vee (v.\var{l2\_color} \wedge v.\var{l2\_parent} = u)$ \\

$\macro{WrongL3ArcDest}(v)$ & = & $ \{ u \mid u \in N(v) \wedge v.\var{l3\_arc}[u] \neq \macro{HasL3Arc}(v,u) \} $ \\
\end{tabular}

\vs
\textbf{Action}:\vs

\begin{tabular}{lllll}
L3FixArc & :: & $\macro{WrongL3ArcDest}(v) \neq \emptyset$ & $\ra$ & $\forall u \in \macro{WrongL3ArcDest}(v), v.\var{l3\_arc}[u] \gets \macro{HasL3Arc}(v,u)$ \\
\end{tabular}

\end{algorithm*}
A red (resp.~blue) node generates an arc to its red child node on the L1 tree (resp. its parent node of the L2 tree) by L3FixArc action.
The action also removes any wrong arc created by a transient fault.
These arcs construct a (possibly non-minimal) weakly $\calS\calT$-reachable DAG $\vec{G}$.

\textbf{Layer 4 (Fig.~\ref{fig:example-layer4}):}
\begin{figure}[tp]
	\centering
	\includegraphics[scale=0.425]{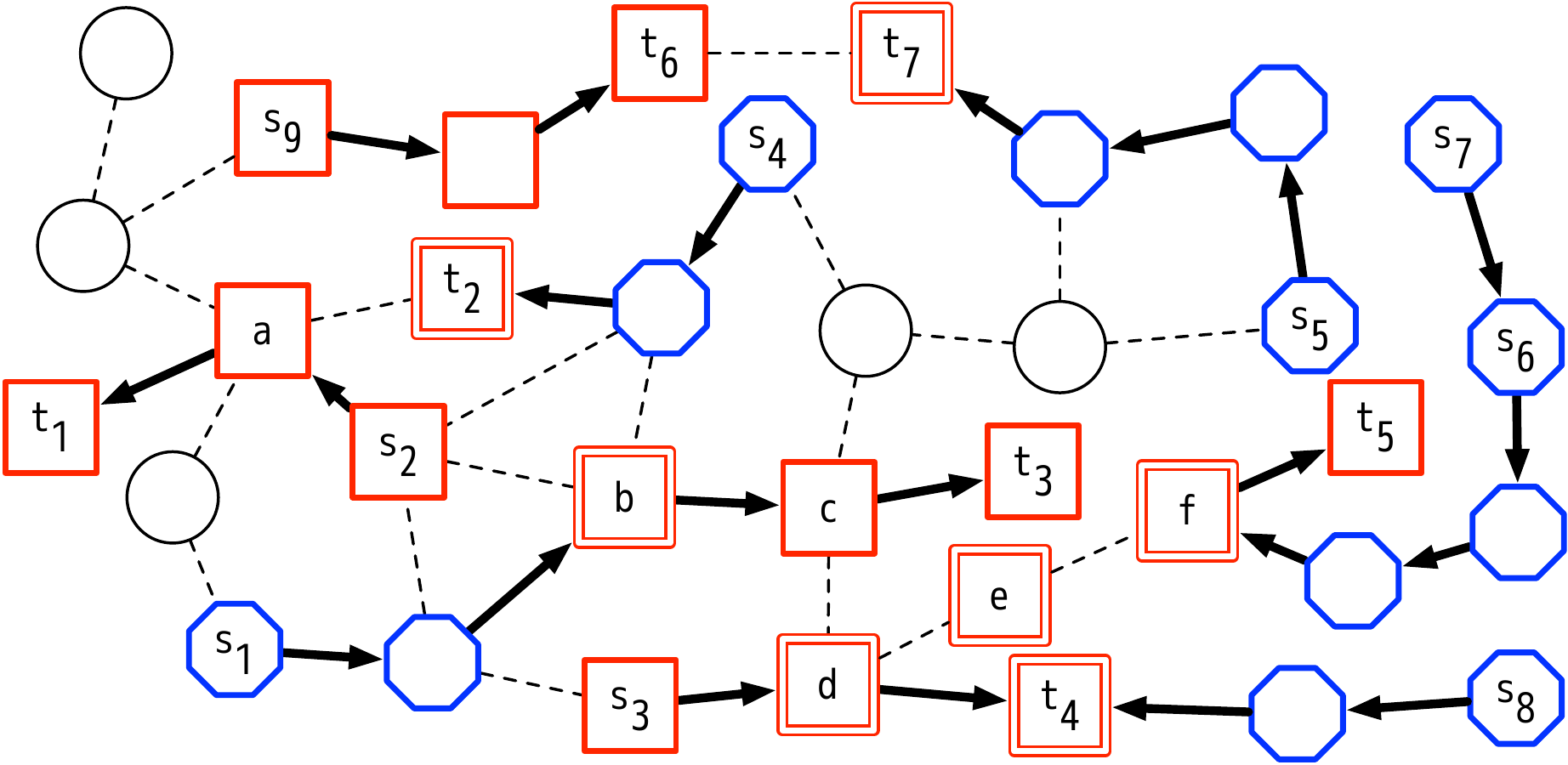}
	\caption{
		A legitimate configuration of layer 4.
		A double red square is a branch node.
	}
	\label{fig:example-layer4}
\end{figure}
\begin{algorithm*}[tp]
\caption{\textsf{L4RedundantArcRemoval} for node $v \in V$}
\label{alg:l4algorithm}
\small

\textbf{Variable}:\vs

\begin{tabular}{lll}
$v.\var{arc}[u] \in \{\false, \true\}$ & : & See Section \ref{sec:problem-spec} for details\\
$v.\var{l4\_branch} \in \{\false, \true\}$ & : & {\true} if $v$ is a branch node\\
\end{tabular}

\vs
\textbf{Macro}:\vs

\begin{tabular}{lll}

$\macro{WrongArcDest}(v)$ & = & $ \{ u \mid u \in N(v) \wedge \neg v.\var{l3\_arc}[u] \wedge v.\var{arc}[u] \} $ \\

$\macro{BranchChild}(v)$ & = & $\{ u \mid u \in N(v) \wedge u.\var{l1\_parent} = v \wedge u.\var{l4\_branch} \}$ \\
$\macro{IsBranch}(v)$ & = & $v.\var{l1\_color} \wedge ( \macro{BlueChild}(v) \neq \emptyset \vee (\macro{BranchChild}(v) = \macro{RedChild}(v) \neq \emptyset) \wedge v \notin \calT )$ \\

$\macro{RemovalRule1}(v,u)$ & = & $u \in \macro{BranchChild}(v) \wedge \macro{BranchChild}(v) \subset \macro{RedChild}(v)$ \\
$\macro{MinBranchChild}(v)$ & = & $\min\left( \argmin_{u \in BranchChild(v)} l_v(u) \right)$ \\
$\macro{RemovalRule2}(v,u)$ & = & $u \in \macro{BranchChild}(v) \wedge \macro{BranchChild}(v) = \macro{RedChild}(v) \wedge u \neq \macro{MinBranchChild}(v)$ \\
$\macro{RemovalRule3}(v,u)$ & = & $u \in \macro{BranchChild}(v) \wedge \macro{BranchChild}(v) = \macro{RedChild}(v) \wedge v \in \calT$ \\
$\macro{RemovalRule4}(v)$ & = & $v.\var{l1\_parent} \neq v \wedge \neg v.\var{l1\_parent}.\var{arc}[v] \wedge BlueChild(v) = \emptyset$ \\
$\macro{IsRedundant}(v,u)$ & = & $\macro{RemoveRule1}(v,u) \vee \macro{RemoveRule2}(v,u) \vee \macro{RemoveRule3}(v,u) \vee \macro{RemoveRule4}(v)$ \\

$\macro{MissingArcDest}(v)$ & = & $ \{ u \mid u \in N(v) \wedge \neg \macro{IsRedundant}(v,u) \wedge \neg v.\var{arc}[u] \wedge v.\var{l3\_arc}[u] \} $ \\
$\macro{RedundantArcDest}(v)$ & = & $ \{ u \mid u \in N(v) \wedge \macro{IsRedundant}(v,u) \wedge v.\var{arc}[u] \} $ \\
\end{tabular}

\vs
\textbf{Action}:\vs

\begin{tabular}{lllll}
L4RemoveWrongArc & :: & $\macro{WrongArcDest}(v) \neq \emptyset$ & $\ra$ & $\forall u \in \macro{WrongArcDest}(v), v.\var{arc}[u] \gets \false$\\
L4FixBranch & :: & $v.\var{l4\_branch} \neq \macro{IsBranch}(v)$ & $\ra$ & $v.\var{l4\_branch} \gets \macro{IsBranch}(v)$ \\
L4AddArc & :: & $\macro{MissingArcDest}(v) \neq \emptyset$ & $\ra$ & $\forall u \in \macro{MissingArcDest}(v), v.\var{arc}[u] \gets \true$ \\
L4RemoveRedundantArc & :: & $\macro{RedundantArcDest}(v) \neq \emptyset $ & $\ra$ & $\forall u \in \macro{RedundantArcDest}(v), v.\var{arc}[u] \gets \false$ \\
\end{tabular}

\end{algorithm*}
In this layer, a node first removes every wrong arc that does not exist in the weakly $\calS\calT$-reachable DAG constructed in Layer 3 by L4RemoveWrongArc.
After that, a red node checks whether it is a \emph{branch} by L4FixBranch action.
A red node is called branch if (i) the node has an incoming arc from a blue node, or (ii) all children in its L1 tree are branch nodes.
For example, in Fig.~\ref{fig:example-layer3}, $t_2$ and $d$ become branch nodes for condition (i) and (ii), respectively.
Redundant arcs are removed based on the existence of branch nodes in neighbor nodes.
Then, a node regenerates an arc that is missing in the output network by L4AddArc.
Finally, a red node $v$ removes every redundant outgoing arc by L4RemoveRedundantArc to guarantee the minimality requirement of the weakly $\calS\calT$-reachable DAG.
There are four removal rules:
\begin{itemize}
	\item \textbf{Rule 1:} if a part of $v$'s child nodes are branch nodes, $v$ removes all arcs to the branch nodes (nodes $s_2$ and $a$ in Fig.~\ref{fig:example-layer4}).
	\item \textbf{Rule 2:} if all of $v$'s child nodes of its L1 tree are branch nodes, $v$ removes all arcs to the nodes except for the arc $(v,u)$ where $u$ has the minimum label in $N(v)$ (node $d$ in Fig.~\ref{fig:example-layer4}).
	\item \textbf{Rule 3:} if all of $v$'s child nodes of its L1 tree are branch nodes and $v$ is a target node, $v$ removes all arcs (node $t_6$ in Fig.~\ref{fig:example-layer4}).
	\item \textbf{Rule 4:} if $v$ has neither an incoming arc from its parent nor an incoming arc from a blue node, $v$ removes all arcs to its child nodes of the L1 tree (nodes $e$ and $f$ after $d$ removes $(d,e)$ in Fig.~\ref{fig:example-layer4}).
\end{itemize}
After removing all redundant arcs, the remaining arcs form a minimal weakly $\calS\calT$-reachable DAG $\vec{G}$, as depicted in Fig.~\ref{fig:example-finish}.
\begin{figure}[tp]
	\centering
	\includegraphics[scale=0.425]{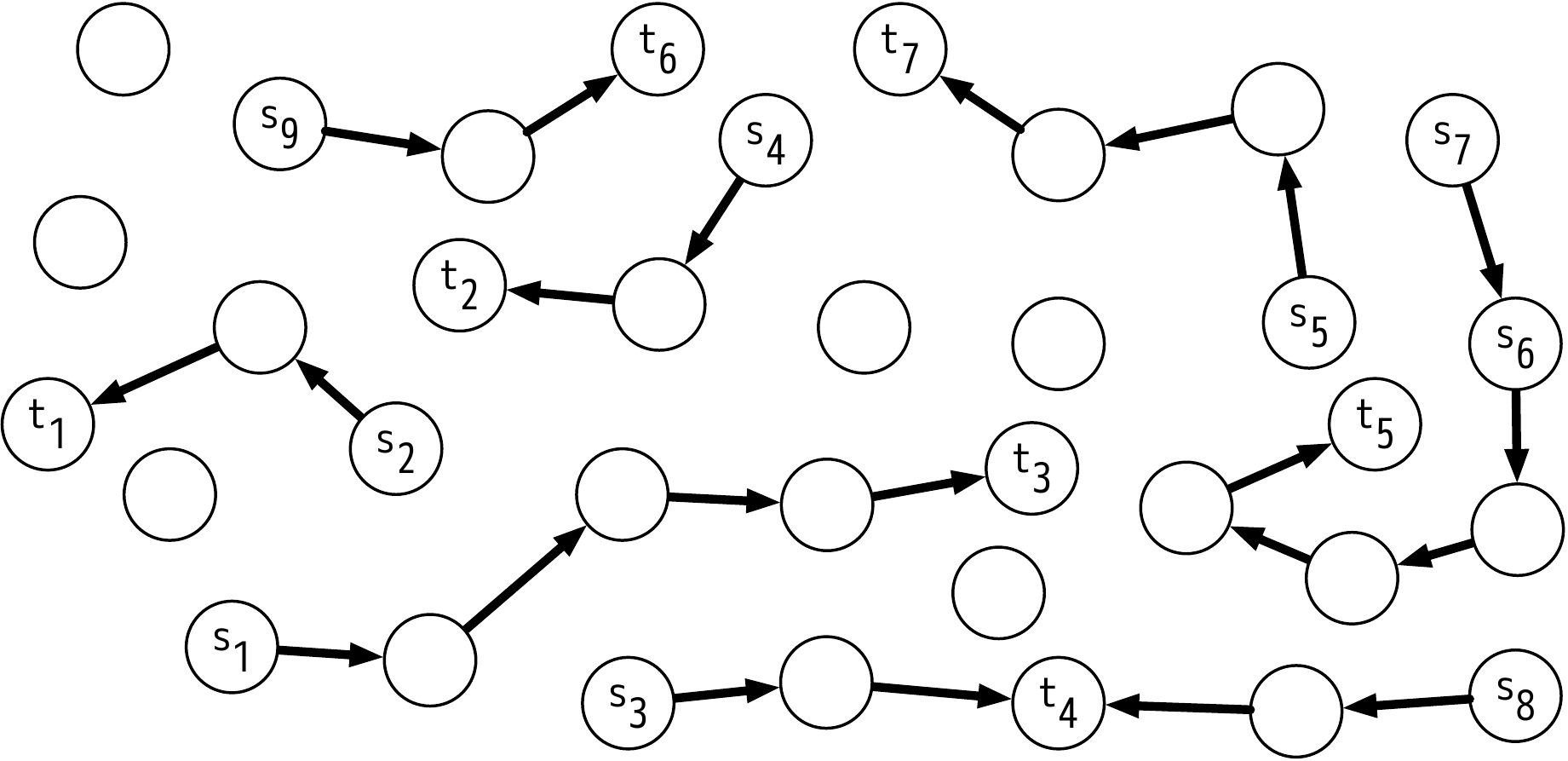}
	\caption{The constructed minimal weakly $\calS\calT$-reachable DAG}
	\label{fig:example-finish}
\end{figure}

\section{Correctness}
\label{sec:correctness}

Here, we show the correctness of Algorithm \textsf{MWSTDAG} by proving the correctness of each layer.

\begin{lemma}
\label{lem:l1-convergence}
From any initial configuration, Algorithm \textsf{MWSTDAG} eventually reaches a legitimate configuration for layer 1, in which,
(1) every node belongs to the nearest BFS tree rooted at a sender node, and 
(2) a node has a red color if and only if the node can reach at least one target node on its L1 tree.
\end{lemma}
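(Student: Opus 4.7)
The plan is to prove Layer~1 converges in three phases, exploiting that \textsf{L1SpanningForest} sits at the bottom of the hierarchical composition so its guards are never masked by higher-layer activity, and noting that no higher-layer action writes to any L1 variable. First I would show that every $v.\var{l1\_dist}$ stabilizes at the hop distance from $v$ to $\calS$ within $O(D)$ rounds; next that $v.\var{l1\_parent}$ stabilizes within one additional round; and finally that $v.\var{l1\_color}$ stabilizes by upward propagation within $O(D)$ more rounds. Once no L1 action is enabled, claims (1) and (2) of the lemma follow directly from the definitions of $\macro{L1CorrectParent}$ and $\macro{IsRed}$.

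For the distance phase I would induct on the BFS level $k\ge 0$. Let $L_k$ denote the set of nodes within hop distance $k$ from $\calS$. In the base case, every sender $s$ has $\macro{L1CorrectDist}(s)=0$, so within one round $s.\var{l1\_dist}=0$, and the guard of L1FixDist stays false at $s$ thereafter. For the inductive step, assume every $v\in L_k$ is stable with $\var{l1\_dist}$ equal to its true distance. A node $w\in L_{k+1}\setminus L_k$ has a neighbor of true distance $k$ that reports exactly $k$, while every other neighbor has true distance at least $k$. The subtle point is that a farther neighbor may initially carry a spuriously small $\var{l1\_dist}$; I would resolve this with the standard self-stabilizing BFS argument that an unsupported chain of underestimates cannot persist back to $\calS$, so within $O(1)$ further rounds every remaining underestimate is forced upward. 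Combining the two directions, $L_{k+1}$ stabilizes in $O(1)$ additional rounds, yielding an $O(D)$ bound on the whole distance phase.

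For the parent phase, once every $\var{l1\_dist}$ equals $\macro{L1CorrectDist}$, the macro $\macro{L1CorrectParent}$ is a deterministic function of the stabilized neighborhood, so one round of L1FixParent at each non-sender produces the unique minimum-distance, minimum-label parent, and claim (1) is established. For the color phase I would induct on the height $h$ of a node in its stabilized L1 tree: a leaf has $\macro{RedChild}$ permanently empty, so after one L1FixColor step its color equals the indicator of membership in $\calT$; and assuming every node at height less than $h$ has stabilized color, a node at height $h$ acquires $\macro{IsRed}$ within one further round. Since the L1 trees have depth $O(D)$, color stabilizes in $O(D)$ rounds. The main obstacle is the distance phase's treatment of spurious small initial values, which requires the supporting-chain argument sketched above; the remaining phases are routine inductions. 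Claim (2) then follows by an easy induction on L1-tree height from $\macro{IsRed}(v) = (v\in\calT) \vee (\macro{RedChild}(v)\neq\emptyset)$: $v.\var{l1\_color}$ is true precisely when some target lies in the subtree of $v$, i.e., is reachable from $v$ along the L1 tree.
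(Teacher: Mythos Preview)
Your three-phase decomposition (distances in $O(D)$ rounds, parents in one round, colors in $O(D)$ rounds) is exactly the reasoning the paper relies on; the paper does not give a standalone proof of this lemma but invokes precisely this phase breakdown in the proof of Lemma~\ref{lem:timecomplexity}. Your observation that Layer~1 sits at the bottom of the hierarchical composition and that no higher layer writes to \var{l1\_dist}, \var{l1\_parent}, or \var{l1\_color} is the correct justification for analyzing \textsf{L1SpanningForest} in isolation, and your identification of the spurious-underestimate issue in the distance phase as the only nontrivial step matches the standard self-stabilizing BFS argument the paper implicitly appeals to.
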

Similarly, we can also prove the correctness of layer 2 algorithm.
\begin{lemma}
\label{lem:l2-convergence}
From any legitimate configuration for layer 1,
Algorithm \textsf{MWSTDAG} eventually reaches a legitimate configuration for layer 2, in which
(1) every non-red node belongs to the nearest tree rooted at a red node, and 
(2) every node that can reach a red node on the tree has a blue color.
\end{lemma}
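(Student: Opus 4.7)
The plan is to mirror the proof of Lemma~\ref{lem:l1-convergence}, leveraging the priorities enforced by the hierarchical collateral composition. First I would argue Layer~1 closure: any Layer~2 action at $v$ fires only when all Layer~1 actions at $v$ are disabled, and because Layer~2 writes only to $v.\var{l2\_dist}$, $v.\var{l2\_parent}$, and $v.\var{l2\_color}$, such a write cannot re-enable any Layer~1 action at $v$ or at a neighbor. Hence the variables $v.\var{l1\_dist}$, $v.\var{l1\_parent}$, and $v.\var{l1\_color}$ retain their Lemma~\ref{lem:l1-convergence}-correct values throughout, and in particular the set $R$ of red nodes is frozen. Because $\calT\neq\emptyset$ and, by Lemma~\ref{lem:l1-convergence}, the root of any L1 tree containing a target is red, $R$ is non-empty, so the L2 BFS forest has well-defined source nodes.

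Next I would establish property (1) by the standard BFS stabilization argument. Let $d(v)$ denote the graph distance from $v$ to $R$. By induction on $k$, within $O(k)$ additional rounds every node $v$ with $d(v)=k$ has $v.\var{l2\_dist}=k$ and keeps L2FixDist permanently disabled: the base case $k=0$ holds because the guard pins $v.\var{l2\_dist}=0$ at every red $v$; the inductive step uses that $\macro{NearestL2ParentDist}(v)$ has already stabilized at value $k-1$. Hence within $O(D)$ rounds L2FixDist is disabled everywhere. Once distances are stable, each L2FixParent firing is a one-shot local correction that installs the deterministic tie-broken nearest neighbor as $v.\var{l2\_parent}$, and disables itself after one further scheduling of $v$. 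This yields property (1).

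Finally, I would handle property (2) by induction along the L2 forest. Once the L2 tree is stable, each non-red sender sets $v.\var{l2\_color}=\true$ upon being next scheduled, and thereafter each non-red non-sender re-evaluates $\macro{IsBlue}$ using only its own fixed $\var{l1\_color}$ and its neighbors' $\var{l2\_parent}$ and $\var{l2\_color}$; therefore the blue color propagates one L2-tree edge per round toward each red root. After $O(D)$ additional rounds the blue assignment is exactly the set of nodes lying on an L2-tree path from a non-red sender to a red root, and L2FixColor is disabled everywhere. Because $\macro{L2CorrectDist}$ and $\macro{L2CorrectParent}$ depend only on already-stable quantities, no color update can retroactively re-enable L2FixDist or L2FixParent.

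The main obstacle I anticipate is justifying the closure interleaving across these phases under asynchronous, weakly fair schedules: actions of different phases may coexist at different nodes, so I must argue that a late L2FixDist or L2FixParent at a far node cannot retroactively re-enable a phase-earlier action at a closer node. I expect this to follow from monotonicity of the L2-distance guards in the neighbors' $\var{l2\_dist}$ once Layer~1 is stable, made rigorous by using $d(v)$ as a potential that only ever decreases to its correct value, so that stabilization waves propagate outward from $R$ without reversing.
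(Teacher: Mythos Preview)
Your proposal is correct and takes essentially the same approach as the paper: the paper does not give a separate proof of Lemma~\ref{lem:l2-convergence} at all, merely remarking that it is proved ``similarly'' to Lemma~\ref{lem:l1-convergence}, and your plan is precisely to mirror that argument (Layer~1 closure via hierarchical collateral composition, then standard BFS-forest stabilization for \var{l2\_dist} and \var{l2\_parent}, then leaf-to-root propagation for \var{l2\_color}). The one place to be a little more careful is the inductive stabilization of \var{l2\_dist}: to conclude that a node at distance $k$ keeps L2FixDist \emph{permanently} disabled you also need the companion invariant that after $k$ rounds no node has $\var{l2\_dist}<\min(d(v),k)$, since otherwise a far neighbor with a spuriously small distance could re-enable the guard; this is the standard two-sided bound for self-stabilizing $\min+1$ BFS and is exactly what your closing ``monotonicity/potential'' remark should be made to say.
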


Based on Lemmas \ref{lem:l1-convergence} and \ref{lem:l2-convergence}, we show that the layer 3 algorithm constructs a weakly $\calS\calT$-reachable DAG.
\begin{lemma}
\label{lem:l3-convergence}
From any legitimate configuration for layer 2, Algorithm \textsf{MWSTDAG} eventually reaches a legitimate configuration for layer 3, in which $\var{l3\_arc}$ of each node forms a weakly $\calS\calT$-reachable DAG $\vec{G}$ of graph $G$.
\end{lemma}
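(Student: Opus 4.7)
The plan is to argue that layer 3 is particularly simple: once layer 2 is legitimate, the macros driving layer 3 are frozen, so each node just overwrites its \var{l3\_arc} variables in one step and we are done. Because layer 3 actions modify neither layer 1 nor layer 2 variables, Lemmas~\ref{lem:l1-convergence} and \ref{lem:l2-convergence} combined with the hierarchical collateral composition keep $v.\var{l1\_color}$, $\macro{RedChild}(v)$, $v.\var{l2\_color}$, and $v.\var{l2\_parent}$ fixed for every $v$ throughout the layer 3 phase. In particular, $\macro{HasL3Arc}(v,u)$ becomes a fixed boolean function of $(v,u)$. A single execution of L3FixArc at $v$ then aligns $v.\var{l3\_arc}[u]$ with $\macro{HasL3Arc}(v,u)$ for every neighbor $u$ simultaneously; afterwards the action is never enabled again, giving both one-round convergence and closure.

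It remains to verify that the digraph $\vec{G}$ induced by $\{v.\var{l3\_arc}\}_v$ satisfies conditions C1--C3. For C1, let $s \in \calS$. If $s$ is red, Lemma~\ref{lem:l1-convergence} guarantees a target $t$ in $s$'s L1 subtree; red propagation forces every node on the $s$-to-$t$ path in that tree to be red, so each such node emits an L3 arc to the next (its red child), and $s$ reaches $t$. If $s$ is blue, Lemma~\ref{lem:l2-convergence} provides a path in the corresponding L2 tree consisting of blue nodes each pointing (in L3) to its L2 parent, ending at a red root where the previous case applies. For C2, any $t \in \calT$ is red by the definition of $\macro{IsRed}$; the unique L1 path from its sender root down to $t$ is entirely red (again by red propagation), and each node on it emits an L3 arc to the next, so $t$ is reachable from that sender.

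The main obstacle is the acyclicity condition C3. The key observation is that a red node $v$ has $v.\var{l2\_color}$ false, which kills the second disjunct of $\macro{HasL3Arc}(v,\cdot)$; hence every outgoing L3 arc of a red node terminates at a red node. Consequently, any directed cycle in $\vec{G}$ must lie entirely inside the red set or entirely inside the blue set. A red-only cycle would be a cycle in a subgraph of the L1 BFS forest (restricted to red--red parent-to-child edges), which is impossible. A blue-only cycle would strictly decrease $\var{l2\_dist}$ along each arc, since a blue node's sole out-arc points to its L2 parent whose L2-distance is smaller by one, again impossible. This completes the verification of C3 and hence of the lemma.
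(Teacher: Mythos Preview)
Your argument is correct. The paper itself states Lemma~\ref{lem:l3-convergence} without proof (it is treated as routine once Lemmas~\ref{lem:l1-convergence} and~\ref{lem:l2-convergence} are in hand), so there is nothing to compare against beyond noting that your write-up supplies exactly the details the authors left implicit: one-round stabilization of \var{l3\_arc} once the lower layers are frozen, C1 via the red path in the L1 tree (extended by the blue L2 path for non-red senders), C2 via the all-red root-to-target L1 path, and C3 via the red/blue dichotomy together with the strict monotonicity of $\var{l1\_dist}$ and $\var{l2\_dist}$ along the respective tree edges.

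One tiny wording point: in the blue case of C1 you appeal to ``the previous case'' at the red root of the L2 tree, but that previous case was phrased for a red \emph{sender}. The argument you gave there (red $\Rightarrow$ target or has a red L1-child, so one can descend to a target along red nodes) actually works for any red node, not just senders, so the invocation is legitimate; it would just read more cleanly if you stated it that way the first time.
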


Previously, we proved the correctness of layers 1--3.
Before proving the correctness of layer 4, we prove the following two supplementary lemmas.

\begin{lemma}
\label{lem:l4-branch}
From any legitimate configuration for layer 3,
Algorithm \textsf{MWSTDAG} reaches a configuration in which (1) every red node that has a blue child node on its L2 tree is a branch node, (2) every non-target red node whose all the child nodes in its L1 tree are red branch nodes is a branch node, and (3) every other red node is not a branch node.
\end{lemma}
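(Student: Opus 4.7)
The plan is to exploit the fact that once in a legitimate layer-3 configuration, everything read by $\macro{IsBranch}$ except the children's \var{l4\_branch} is permanently frozen, and then to argue that \var{l4\_branch} stabilizes bottom-up along the L1 trees.

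First, by Lemmas \ref{lem:l1-convergence}--\ref{lem:l3-convergence} and the closure property of the hierarchical collateral composition, from any legitimate layer-3 configuration onward, the variables \var{l1\_color}, \var{l1\_parent}, \var{l2\_color}, \var{l2\_parent}, and \var{l3\_arc} of every node remain constant. Consequently, the L1-tree structure and the sets $\macro{RedChild}(v)$, $\macro{BlueChild}(v)$, and $\{u \in N(v) \mid u.\var{l1\_parent}=v\}$ are permanently fixed, and the only term that $\macro{IsBranch}(v)$ can still depend on dynamically is $\macro{BranchChild}(v)$, which is determined by the \var{l4\_branch} values of $v$'s L1-children.

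Second, I would dispose of the priority interference coming from L4RemoveWrongArc, the only layer-4 action with higher priority than L4FixBranch. I would show that once $v.\var{arc}[u]=\false$ for every $u$ with $v.\var{l3\_arc}[u]=\false$, the guard $\macro{WrongArcDest}(v)\neq\emptyset$ can never be re-enabled: L4AddArc only sets $v.\var{arc}[u]=\true$ for $u \in \macro{MissingArcDest}(v) \subseteq \{u : v.\var{l3\_arc}[u]\}$, and L4RemoveRedundantArc only sets arcs to $\false$. Hence L4RemoveWrongArc fires at most once per node and then stays disabled forever; after this first round, L4FixBranch is the highest-priority enabled action at any $v$ with $v.\var{l4\_branch}\neq\macro{IsBranch}(v)$ (L4AddArc and L4RemoveRedundantArc never modify any variable appearing in $\macro{IsBranch}$, so they cannot interfere with convergence of \var{l4\_branch}).

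Third, I would induct on the height $H(v)$ of $v$ in its L1 tree ($H(v)=0$ for L1-leaves). For the base case, at an L1-leaf $v$ we have $\macro{BranchChild}(v)=\emptyset$, so $\macro{IsBranch}(v)$ reduces to $v.\var{l1\_color} \wedge \macro{BlueChild}(v)\neq\emptyset$, a value determined entirely by frozen layer-1--2 data; within one fair round after L4RemoveWrongArc is disabled, either L4FixBranch fires at $v$ or it is already disabled there, so $v.\var{l4\_branch}=\macro{IsBranch}(v)$ and stays so. For the inductive step, the L1-children of $v$ have smaller height, hence by the induction hypothesis their \var{l4\_branch} values are already stably correct; therefore $\macro{BranchChild}(v)$ is stable, so $\macro{IsBranch}(v)$ is stable, and one additional round makes $v.\var{l4\_branch}=\macro{IsBranch}(v)$. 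Since the maximum height of any L1 tree is at most the diameter $D$, the process terminates in $O(D)$ rounds.

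Finally, I would verify that the three cases of the lemma correspond exactly to the Boolean structure of $\macro{IsBranch}(v)$: expanding $v.\var{l1\_color} \wedge (\macro{BlueChild}(v)\neq\emptyset \vee (\macro{BranchChild}(v)=\macro{RedChild}(v)\neq\emptyset \wedge v\notin\calT))$ gives precisely (1) and (2) for $\macro{IsBranch}(v)=\true$, while (3) is the negation (a non-red node, or a red node with no blue L2-child and either a non-branch red L1-child or no red L1-children and belonging to $\calT$). The main obstacle is the bookkeeping around priority: proving carefully that L4RemoveWrongArc cannot be reactivated and that the remaining lower-priority layer-4 actions have no feedback on \var{l4\_branch}, so that the clean bottom-up induction on the L1 trees goes through.
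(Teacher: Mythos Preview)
The paper states Lemma~\ref{lem:l4-branch} without proof. Your argument is correct and is precisely the routine stabilization argument the authors evidently had in mind: layers~1--3 are frozen by closure, L4RemoveWrongArc is one-shot thereafter (since L4AddArc only raises $v.\var{arc}[u]$ where $v.\var{l3\_arc}[u]$ holds and L4RemoveRedundantArc only lowers it), and then $\var{l4\_branch}$ converges bottom-up along the now-static L1 forest by induction on height, with the final identification of $\macro{IsBranch}$ with cases~(1)--(3) being a direct reading of the macro.
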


\begin{lemma}
\label{lem:removal-safety}
In a legitimate configuration for layer 3 where only correct nodes are branch nodes, any removal of an arc by L4RemoveRedundantArc action does not break the conditions C1 and C2 of the weakly $\calS\calT$-reachable DAG.
\end{lemma}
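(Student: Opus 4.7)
The plan is a case analysis on the four removal rules R1--R4 that make up $\macro{IsRedundant}$, preceded by one structural claim about correctly marked branch nodes. The claim is: if $u.\var{l4\_branch}$ holds and $t \in \calT$ lies in $u$'s L1 subtree (possibly $t = u$), then the L1 path from $u$ down to $t$ contains a node $r$ with $\macro{BlueChild}(r) \neq \emptyset$. I would prove this by induction on the length of that path. When $t = u$, clause (ii) of $\macro{IsBranch}$ is ruled out by $u \in \calT$, so $u$ must be a branch by clause (i), giving $r = u$. When $t$ is a proper descendant of $u$, either $u$ itself already has $\macro{BlueChild}(u) \neq \emptyset$ (take $r = u$), or $u$ is a branch only by clause (ii), in which case every red L1 child of $u$---including the one on the path to $t$---is itself a branch, and the induction hypothesis applied to that child supplies the desired $r$. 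The usefulness of this claim is that in the layer 3 DAG such an $r$ has an incoming arc from some $b \in \macro{BlueChild}(r)$, and by the recursive definition of $\macro{IsBlue}$ there is a chain of blue-to-L2-parent arcs from some blue sender $s$ down to $b$; composing this with the surviving L1 arcs from $r$ down to $t$ yields a sender-to-target path that does not use the arc entering $u$ from $u.\var{l1\_parent}$.

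Using the claim, rules R1--R3 are handled uniformly. For R1, $v$ keeps at least one arc to a non-branch red child $w$, and since $w$ is red, Lemma \ref{lem:l1-convergence} guarantees that $w$'s L1 subtree contains some target reachable from $v$ along the surviving L1 arcs, preserving C1 for the sender rooting $v$'s L1 tree. Simultaneously, applying the claim to each removed branch child $u$ preserves C2 for every target in $u$'s L1 subtree. R2 is the same argument with the surviving arc fixed to be $(v, \macro{MinBranchChild}(v))$. For R3, $v \in \calT$ forces $v$ to be a target rather than a sender by $\calS \cap \calT = \emptyset$, so C1 imposes no obligation on $v$; C2 for every target in $v$'s L1 subtree (other than $v$ itself, whose incoming arcs are untouched) follows from the claim applied to each branch child of $v$. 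Because distinct red L1 trees share no red-to-red arcs and the blue-to-L2-parent arcs are never targeted by R1--R3, none of these removals perturbs any other sender's path.

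R4 is qualitatively different and relies on the current unreachability of $v$ rather than on the branch claim. Its guard $v.\var{l1\_parent} \neq v \wedge \neg v.\var{l1\_parent}.\var{arc}[v] \wedge \macro{BlueChild}(v) = \emptyset$ excludes senders and certifies that in the output DAG $v$ has no incoming arc at all, since the only possible incoming arcs come either from the red L1 parent or from blue L2 children. Hence $v$ is unreachable from every sender, no outgoing arc $(v,c)$ can lie on any sender-to-target path, and removing them cannot break C1 or C2. The main obstacle I anticipate is the branch claim itself: the base case relies crucially on clause (ii) of $\macro{IsBranch}$ explicitly excluding targets, and the inductive step requires threading a blue-access point through every possible target in the branch's L1 subtree. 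Once the claim is in hand, the four-way case analysis is routine.
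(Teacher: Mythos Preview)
Your proposal follows the same four-rule case split the paper uses, but it is considerably more rigorous. The paper's own proof of this lemma is example-driven: it points at one small picture per rule, names the specific nodes in that picture, and asserts that the relevant sender-to-target paths survive; it never isolates or proves anything like your structural claim about branch nodes. Your inductive claim---that every target in a branch's L1 subtree has a blue entry point somewhere on the L1 path down to it, with the base case forced by the $v \notin \calT$ clause of $\macro{IsBranch}$---is precisely the general fact those pictures are meant to illustrate, and once it is available your uniform treatment of R1--R3 and your unreachability argument for R4 are both cleaner and more complete than the paper's informal walk-through. What your version buys is generality (you cover arbitrary configurations rather than one figure per rule); what the paper's version buys is brevity, at the cost of leaving the reader to reconstruct the induction you spell out.
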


\begin{proof}
In L4RemoveRedundantArc action, there are four rules, as depicted in Fig.~\ref{fig:removal-rules}.
Hereafter, we will verify that each rule removes only a redundant arc, and the conditions C1 and C2 still hold after the removal.

\begin{figure}[tp]
	\centering
	\begin{tabular}{ccc}
		\includegraphics[scale=0.425]{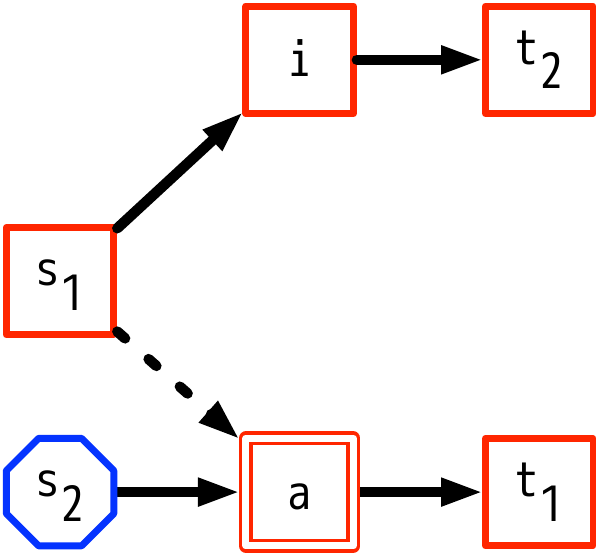} & & \includegraphics[scale=0.425]{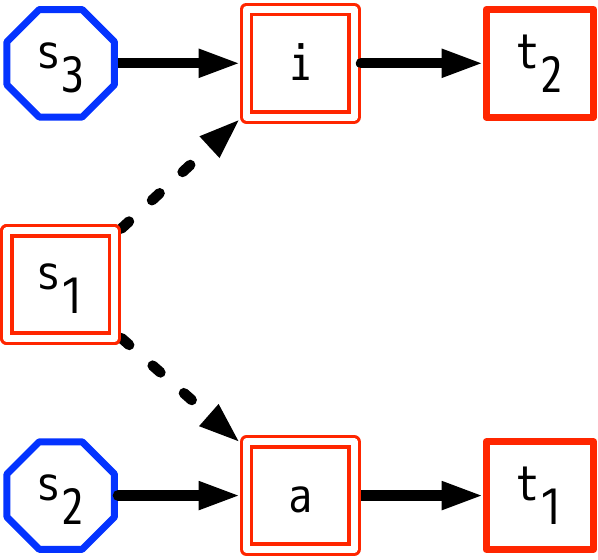} \\
		(a) Rule 1 & & (b) Rule 2 \\
		& & \\
		\includegraphics[scale=0.425]{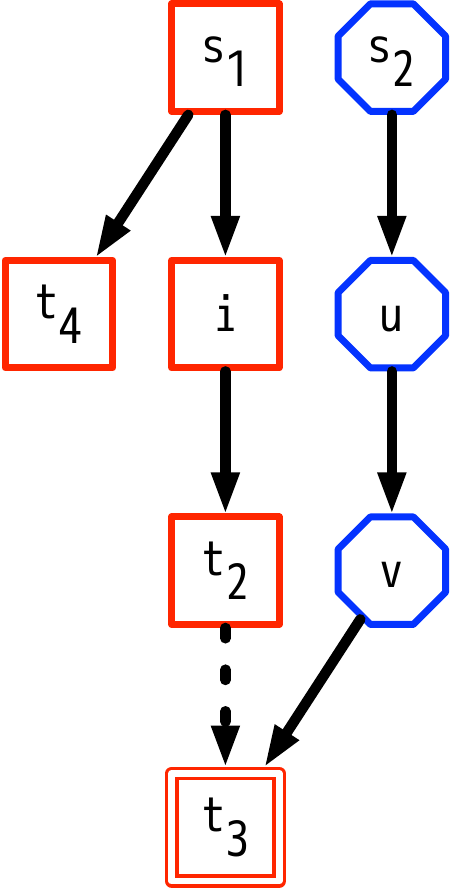} & & \includegraphics[scale=0.425]{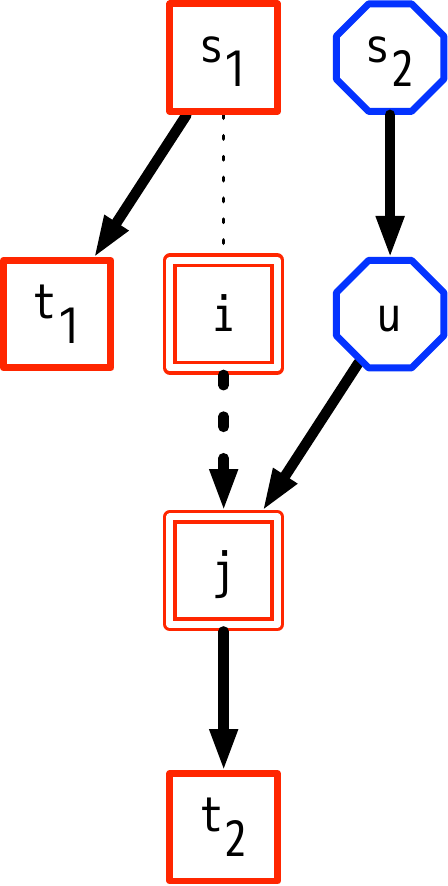} \\
		(c) Rule 3 & & (d) Rule 4 \\
	\end{tabular}
	\caption{
		Removal rules of Algorithm \textsf{MWSTDAG} for a redundant arc.
		A dashed arrow will be removed by a rule.
	}
	\label{fig:removal-rules}
\end{figure}

By Rule 1, a node removes every outgoing arc to a branch node if the node has an arc to a non-branch node.
In Fig.~\ref{fig:removal-rules}(a), arc $(s_1,a)$ is removed by $s_1$.
However, $s_1$ and $s_2$ can reach $t_2$ and $t_1$, respectively.
Therefore, the conditions still hold.

If all child nodes are branch nodes, by Rule 2, a node removes every outgoing arc to branch nodes except for a child branch node with a minimum label among the branch nodes.
In Fig.~\ref{fig:removal-rules}(b), $(s_1,a)$ or $(s_1,i)$ is removed based on their labels, while the conditions still hold because $s_1$ can reach either nodes $t_1$ or $t_2$, and $t_1$ and $t_2$ are still reachable from $s_2$ and $s_3$, respectively.

Rule 3 removes every outgoing arc from a target node to a branch node if all child nodes are branch nodes.
In Fig.~\ref{fig:removal-rules}(c), $(t_2,t_3)$ is removed by $t_2$ because this arc is redundant.
However, $t_3$ is still reachable from $s_2$, and this removal does not break the conditions.

Rule 4 is different from the other rules and cleans up an outgoing arc from a node having no incoming arc.
Such a situation happens when an arc is removed by Rules 1, 2, and 3.
In Fig.~\ref{fig:removal-rules}(d), node $i$ has no incoming arc, so the node removes $(i,j)$.
Since there is no incoming arc to $i$, the conditions hold.

We confirmed that, after applying every rule, the conditions C1 and C2 still held, and therefore, this lemma holds.
\end{proof}

Next, we prove the correctness of layer 4 with Lemmas \ref{lem:l4-branch} and \ref{lem:removal-safety}.

\begin{lemma}
\label{lem:l4-convergence}
From any legitimate configuration for layer 3, Algorithm \textsf{MWSTDAG} eventually reaches a legitimate configuration for layer 4, in which a minimal weakly $\calS\calT$-reachable DAG is constructed.
\end{lemma}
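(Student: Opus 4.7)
The plan is to exploit the fact that, by the hierarchical collateral composition, a legitimate configuration for layer 3 freezes every variable maintained by layers 1--3, so that throughout the analysis $\var{l1\_\ast}$, $\var{l2\_\ast}$, and $\var{l3\_arc}$ can be treated as constants. The goal then splits into two parts: (i) showing that the four layer-4 actions converge to a terminal configuration in $O(D)$ rounds, and (ii) verifying that this terminal configuration satisfies the specification of a \emph{minimal} weakly $\calS\calT$-reachable DAG.

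For convergence I would proceed in four stages mirroring the intra-layer priority order. First, L4RemoveWrongArc fires at most once per node, clearing any $v.\var{arc}[u]$ that is $\true$ while $v.\var{l3\_arc}[u]=\false$; since $\var{l3\_arc}$ is frozen, this action is permanently disabled after one round. Second, by Lemma \ref{lem:l4-branch}, L4FixBranch stabilizes $v.\var{l4\_branch}$ to its correct value in $O(D)$ rounds via a bottom-up propagation along the L1 tree. Third, once the branch values are fixed, the predicates $\macro{RemovalRule1}$, $\macro{RemovalRule2}$, and $\macro{RemovalRule3}$ depend only on frozen quantities, so the sole dynamic ingredient of $\macro{IsRedundant}$ is $\macro{RemovalRule4}$, whose truth at $v$ is governed by $v.\var{l1\_parent}.\var{arc}[v]$. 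Fourth, I would induct on L1-depth: once $v$'s incoming arc from its L1 parent has settled, Rule 4 at $v$ is determined and the combined effect of L4AddArc and L4RemoveRedundantArc drives $v.\var{arc}[u]$ to $v.\var{l3\_arc}[u]\wedge\neg\macro{IsRedundant}(v,u)$ in one further round. Crucially, the cascade is strictly monotone: once Rule 4 removes an arc $(p,v)$, the guard at $v$ makes Rule 4 permanently true, so L4AddArc can never reinstate a removed arc, ruling out oscillation between L4AddArc and L4RemoveRedundantArc and giving a total bound of $O(D)$ rounds.

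For the correctness of the terminal configuration, conditions C1 and C2 are preserved because every removal performed during convergence satisfies the hypothesis of Lemma \ref{lem:removal-safety}, and condition C3 is immediate since $\vec{G}(\var{arc})$ is a subgraph of the acyclic digraph $\vec{G}(\var{l3\_arc})$ produced by Lemma \ref{lem:l3-convergence}. For minimality I would fix an arbitrary remaining arc $(v,u)$ and show that its removal violates C1 or C2, proceeding by case analysis on why none of Rules 1--4 marks it redundant. If $(v,u)$ is a red arc to a non-branch red child $u$, then $u$ has no blue incoming and $v$ is $u$'s sole incoming neighbor in $\vec{G}(\var{arc})$, so the removal isolates $u$'s red subtree and breaks C2 at some target inside it. If $(v,u)$ is the unique branch child retained by Rule 2, removing it strips $v$ of every outgoing arc, which by induction on L1-depth propagates a reachability failure up to some ancestor sender and violates C1. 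If $(v,u)$ is a surviving blue-to-red arc, then the survival of Rule 4 at $v$ means that either $v$ is a blue sender or $v$ has incoming arcs in $\vec{G}(\var{arc})$, so $(v,u)$ is the unique pipe from $v$'s blue chain into its anchor red node.

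The main obstacle will be the minimality argument; the convergence and safety halves reduce cleanly to the preceding lemmas and a straightforward top-down induction along the L1 tree, but minimality requires a global case analysis that correctly couples the L1 tree, the L2 tree, and the propagation of branch markings. The subtlest subcase is Rule 2 keeping a single arc into an all-branch subtree: I need to argue that the blue chains providing ``alternative reachability'' into that subtree do not already cover the sender-to-target path routed through $v$, so that removing $(v,u)$ genuinely severs a required connection. My intention is to formalize this by exhibiting, for every kept arc $(v,u)$, an explicit witness sender--target pair whose unique path in $\vec{G}(\var{arc})$ passes through $(v,u)$; the non-applicability of each of Rules 1--4 to $(v,u)$ should force the existence of such a witness.
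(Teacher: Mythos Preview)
Your proposal is correct in outline, but it takes a genuinely different route from the paper on the minimality half, and your convergence half is considerably more explicit than what the paper actually writes.

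For convergence, the paper's proof of Lemma~\ref{lem:l4-convergence} essentially just invokes Lemmas~\ref{lem:l3-convergence} and~\ref{lem:l4-branch} and then moves directly to minimality; the round-by-round stabilization you spell out (one round for \textsf{L4RemoveWrongArc}, $O(D)$ for branches, then a top-down induction on L1-depth to settle Rule~4 and hence $\var{arc}$) is deferred to Lemma~\ref{lem:timecomplexity}. Your monotonicity observation is right in substance, though the phrasing ``once Rule~4 removes an arc $(p,v)$'' is slightly off: it is $p$ that removes $(p,v)$ by some rule, which then \emph{enables} Rule~4 at $v$; the monotonicity follows from the depth induction you set up, not from Rule~4 alone.

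For minimality, the paper argues by contradiction: it assumes some arc $(u,v)$ in the final DAG is removable without violating C1 or C2, infers the existence of an alternative sender $s_2$ reaching $t_1$ through a blue chain, and then shows that the resulting branch markings would have forced the removal rules to delete $(u,v)$ (or an ancestor arc, with Rule~4 cascading down to $(u,v)$), a contradiction. You instead go direct: for every surviving arc you exhibit a sender--target witness whose unique path uses that arc. Both strategies are valid; the paper's buys a shorter argument because it only has to chase one path configuration (with two pictorial cases), while yours requires a complete case split over arc types but yields a more constructive statement (an explicit witness per arc). One small gap in your case list: the ``blue'' case must also cover blue-to-\emph{blue} arcs along the L2 chain, not only the terminal blue-to-red arc; the same witness argument (the blue sender at the bottom of the chain) works, but you should say so explicitly.
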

\begin{proof}
\begin{figure}[tp]
	\centering
	\includegraphics[scale=0.425]{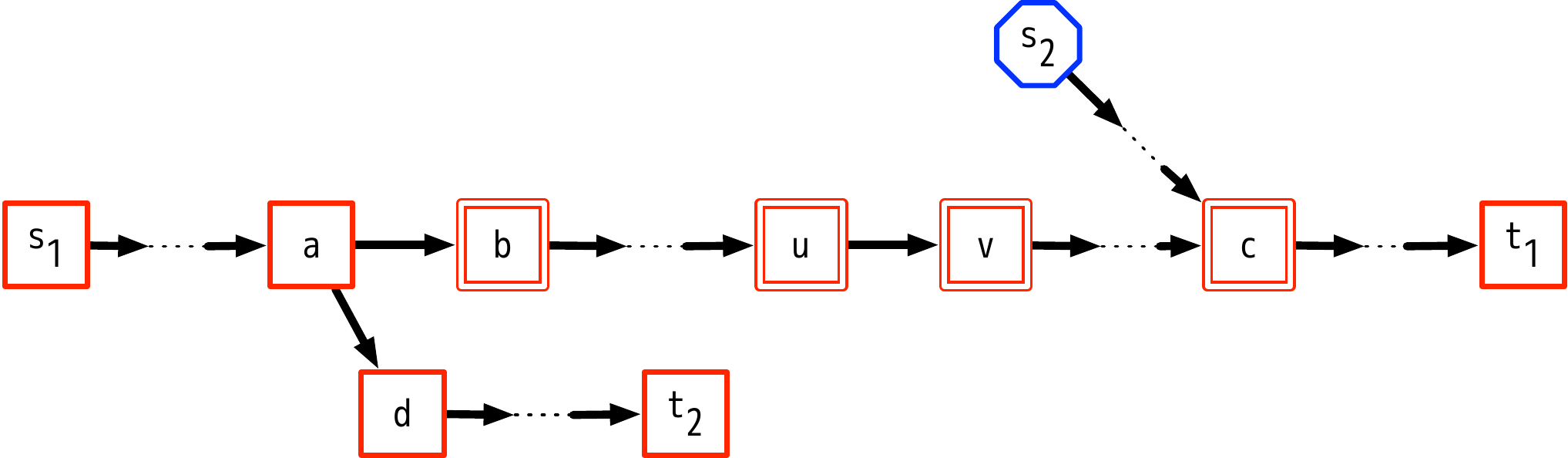}\\
	(a) Case 1\\
	\includegraphics[scale=0.425]{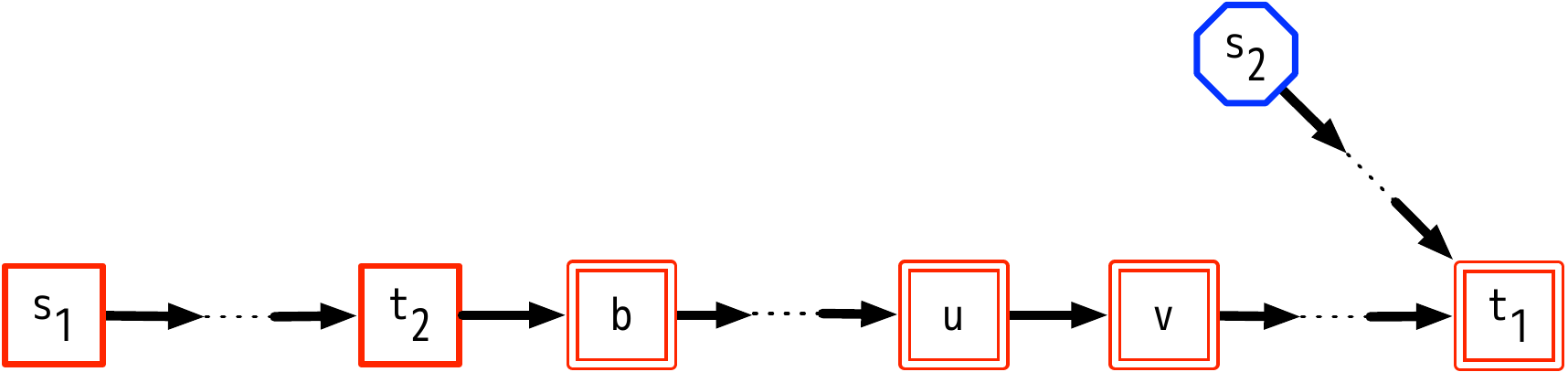}\\
	(b) Case 2\\
	\caption{Proof assumption of Lemma \ref{lem:l4-convergence}}
	\label{fig:minimality-proof}
\end{figure}

Lemma \ref{lem:l3-convergence} proved that Algorithm \textsf{MWSTDAG} eventually constructs a (possibly non-minimal) weakly $\calS\calT$-reachable DAG $\vec{G}$ from any layer 2 legitimate configuration, and Lemma \ref{lem:l4-branch} ensures that every red node satisfying one of the conditions of a branch node eventually sets \var{l4\_branch} to \true.
In addition, Lemma \ref{lem:removal-safety} proves that, after removing any redundant arcs of $\vec{G}$ by L4RemoveRedundantArc action, the resulting DAG $\vec{G'}$ is still a weakly $\calS\calT$-reachable DAG.
Therefore, the remaining question is whether $\vec{G'}$ is minimal or not.
To answer the question positively, we assume that there is an arc $(u, v)$ that can be removed without breaking conditions C1 and C2 and prove by contradiction that such an arc does not exist.

Since $(u, v)$ is redundant and can be removed safely, there must be a sender node $s_1$ and a target node $t_1$ that have a path $(s_1, \dots, u, v, \dots, t_1)$ through arc $(u,v)$ on the constructed minimal weakly $\calS\calT$-reachable DAG.
In addition, $s_1$ can reach another target node, $t_2$, and $t_1$ is reachable from another sender node $s_2$.
In this case, the intermediate node $c$ in the path that has an incoming arc from $s_2$ must be a branch node because $s_2$ is a blue node, and its parent and ancestor nodes also become branch nodes by L4FixBranch action.
Thus, at least nodes $b, \dots, u, v, \dots, c$ are branch nodes, where $b$ is the node between $s_1$ and $u$ in the path.
We can consider two cases for this situation, as illustrated in Fig.~\ref{fig:minimality-proof}.
Note that case 2 of Fig.~\ref{fig:minimality-proof} is the special case of case 1 when $a=t_2$ and $c=t_1$.

For case 1 (Fig.~\ref{fig:minimality-proof}(a)), the parent node $a$ of $b$ applies Rules 1 or 2 depending on whether $d$ is a branch or not, as follows:
\begin{itemize}
\item \textbf{(i) $d$ is a branch node:}
Since all child nodes of $a$ are branch nodes, node $a$ removes one of two arcs $(a, b)$ or $(a, d)$ based on the labels of $b$ and $d$ on $a$ by Rule 2.
However, node $a$ cannot remove $(a,d)$ because this removal makes $(u,v)$ not redundant, which contradicts the assumption.
Therefore, the removed arc must be $(a,b)$.
In this case, node $b$ realizes that it has no incoming arc and removes its outgoing arc by Rule 4.
This removal propagates from node $b$ to the parent of node $c$, including node $u$.
As a result, arc $(u,v)$ is removed, which contradicts the assumption.

\item \textbf{(ii) $d$ is not a branch node:}
By Rule 1, node $a$ removed all outgoing arcs to child branch nodes, including $(a,b)$.
After this removal, as with case (i), $(u,v)$ is also removed.
This leads to a contradiction.
\end{itemize}

For case 2 (Fig.~\ref{fig:minimality-proof}(b)), target node $t_2$ removes $(t_2,b)$ by Rule 3, and then, the remaining arcs from $b$ to $t_1$, including $(u,v)$, are also removed by Rule 4.
This is a contradiction.

Both the cases contradict the assumption.
Therefore, we can conclude that the constructed weakly $\calS\calT$-reachable DAG $\vec{G'}$ is minimal.
\end{proof}

The following lemma proves the preferable property of the proposed algorithm after reaching legitimate configurations.
\begin{lemma}
\label{lem:silentness}
Algorithm \textsf{MWSTDAG} is silent.
\end{lemma}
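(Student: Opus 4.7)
The plan is to combine Lemmas \ref{lem:l1-convergence}, \ref{lem:l2-convergence}, \ref{lem:l3-convergence}, and \ref{lem:l4-convergence} to show that every fair execution eventually reaches a configuration $C^{*}$ that is simultaneously legitimate for all four layers, and then to verify that in $C^{*}$ the guards of all actions of \textsf{MWSTDAG} are false, so that $C^{*}$ is a final configuration.

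First, I would chain the convergence results. By Lemma \ref{lem:l1-convergence} a fair execution reaches some $C_1$ legitimate for layer 1. A direct inspection of the pseudocodes shows that the variables written by a given layer are disjoint from those written by the other layers: layer 1 writes only \var{l1\_dist}, \var{l1\_parent}, \var{l1\_color}, layer 2 only the analogous \var{l2\_*} variables, layer 3 only \var{l3\_arc}, and layer 4 only \var{arc} and \var{l4\_branch}. Moreover, the predicate defining layer-$i$ legitimacy depends only on variables of layers $1,\dots,i$. Hence layer-$i$ legitimacy is preserved by any action of layers $i+1,\dots,4$, so the set of configurations legitimate for layer 1 is closed under \textsf{MWSTDAG}. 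Then, starting from $C_1$, Lemma \ref{lem:l2-convergence} gives $C_2$ legitimate for layers 1 and 2; iterating with Lemmas \ref{lem:l3-convergence} and \ref{lem:l4-convergence} yields $C^{*}$ legitimate for all four layers.

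Next, I would verify that $C^{*}$ is final by checking every action. In any layer-1-legitimate configuration, $v.\var{l1\_dist}$, $v.\var{l1\_parent}$, and $v.\var{l1\_color}$ coincide with $\macro{L1CorrectDist}(v)$, $\macro{L1CorrectParent}(v)$, and $\macro{IsRed}(v)$, respectively, so L1FixDist, L1FixParent, L1FixColor are disabled. The same argument disables the three layer-2 actions. For layer 3, layer-3 legitimacy gives $v.\var{l3\_arc}[u]=\macro{HasL3Arc}(v,u)$ for every neighbor, so $\macro{WrongL3ArcDest}(v)=\emptyset$ and L3FixArc is disabled. For layer 4, legitimacy asserts that $\var{arc}$ encodes a minimal weakly $\calS\calT$-reachable DAG; combined with Lemma \ref{lem:l4-branch} it forces $v.\var{l4\_branch}=\macro{IsBranch}(v)$ together with $\macro{WrongArcDest}(v)=\macro{MissingArcDest}(v)=\macro{RedundantArcDest}(v)=\emptyset$, which disables L4RemoveWrongArc, L4FixBranch, L4AddArc, and L4RemoveRedundantArc.

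The main obstacle is the careful bookkeeping of the closure claim, namely that the predicate defining layer-$i$ legitimacy truly depends on no variable written by a higher layer. Once this is stated as a short observation based on the pseudocodes, the rest of the argument is a mechanical check that each of the ten actions of \textsf{MWSTDAG} is disabled in $C^{*}$. Since $C^{*}$ is reached in a finite number of rounds along every fair execution and $C^{*}\mapsto_{(V',\textsf{MWSTDAG})}C^{*}$ for every $V'\subseteq V$, the algorithm is silent.
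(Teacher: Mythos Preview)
The paper states Lemma~\ref{lem:silentness} without proof, so there is no argument in the paper to compare against. Your approach---chaining Lemmas~\ref{lem:l1-convergence}, \ref{lem:l2-convergence}, \ref{lem:l3-convergence}, and \ref{lem:l4-convergence} to reach a configuration $C^{*}$ legitimate for all four layers, invoking the disjointness of the variable sets written by each layer to obtain closure, and then checking that each of the ten guards is false in $C^{*}$---is exactly the standard route to establish silence for a layered self-stabilizing algorithm and is what one would expect the omitted proof to contain.

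One point to tighten: for layer~4 you deduce $\macro{MissingArcDest}(v)=\macro{RedundantArcDest}(v)=\emptyset$ from the fact that $\var{arc}$ encodes \emph{a} minimal weakly $\calS\calT$-reachable DAG. Minimality together with Lemma~\ref{lem:removal-safety} does give $\macro{RedundantArcDest}(v)=\emptyset$, but $\macro{MissingArcDest}(v)=\emptyset$ does not follow from minimality alone, since the layer-3 DAG may admit several distinct minimal sub-DAGs and $\macro{IsRedundant}$ singles out a particular one. The cleanest fix is to argue termination of layer~4 directly: once layers~1--3 and $\var{l4\_branch}$ are stable, Rules~1--3 depend only on fixed quantities, so arcs removed by them are never re-added by L4AddArc; Rule~4 then propagates removals monotonically from parent to child along the L1 tree. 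Hence no arc oscillates, all four layer-4 guards eventually become false, and the resulting configuration is final. With this adjustment your argument is complete.
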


We also prove the time and space complexities of Algorithm \textsf{MWSTDAG} with the following two lemmas.
\begin{lemma}
\label{lem:timecomplexity}
From any initial configuration on a given simple and connected graph $G$ whose diameter is $D$, Algorithm \textsf{MWSTDAG} constructs a minimal weakly $\calS\calT$-reachable DAG within $O(D)$ rounds.
\end{lemma}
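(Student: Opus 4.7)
The plan is to bound the convergence time of each layer separately and then add them, exploiting the fact that under hierarchical collateral composition, the lower layer algorithms converge independently of the higher ones (their actions and guards refer only to lower-layer variables), while higher-layer actions are disabled until every lower-layer action is disabled. Since we are composing four layers, it suffices to show that each layer needs $O(D)$ rounds to stabilize starting from a configuration in which all lower layers are already legitimate.

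First I would treat \textsf{L1SpanningForest}. The variables $\var{l1\_dist}$ and $\var{l1\_parent}$ together implement a standard BFS-forest construction rooted at the set $\calS$; a classical argument shows that after $k$ rounds, every node at distance $k$ from $\calS$ has its correct distance and its $\var{l1\_parent}$ fixed, so after $O(D)$ rounds the forest structure stabilizes. Once the forest is correct, the red-color predicate $\macro{IsRed}$ propagates from the leaves (targets, i.e., the deepest red sources) toward the roots: after $j$ additional rounds every node whose L1-subtree contains a target within depth $j$ has $\var{l1\_color}=\true$. Since L1-depths are at most $D$, layer~1 stabilizes in $O(D)$ rounds, which matches Lemma~\ref{lem:l1-convergence}. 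The same reasoning applied to \textsf{L2SpanningForest}, now rooted at red nodes, yields another $O(D)$ bound for layer~2 via Lemma~\ref{lem:l2-convergence}.

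Next I would handle \textsf{L3WSTDAG}. Given a legitimate layer-2 configuration, the macro $\macro{HasL3Arc}(v,u)$ is a pure function of variables already stabilized in layers 1 and 2, so a single execution of L3FixArc at every node suffices; layer~3 therefore adds only $O(1)$ rounds. Layer~4 is where the real work happens. After at most one round, L4RemoveWrongArc cleans every arc that is not supported by $\var{l3\_arc}$. The branch predicate $\macro{IsBranch}$ is then computed bottom-up along the L1 trees: a red node with a blue child becomes a branch after one round, and branchness then propagates from L1-children to L1-parents one level per round, so all correct branch assignments are in place in $O(D)$ rounds, by Lemma~\ref{lem:l4-branch}. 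A constant number of additional rounds of L4AddArc puts in all arcs of $\vec G$ that are not yet removable. Finally, L4RemoveRedundantArc applies Rules 1--3 locally (at most $O(1)$ rounds after branch values are settled), and Rule 4 propagates the ``no-incoming-arc'' deletion down the L1 tree one level per round, which is again $O(D)$ rounds. By Lemmas~\ref{lem:removal-safety} and \ref{lem:l4-convergence} the resulting configuration is legitimate for layer~4.

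Summing the four contributions $O(D)+O(D)+O(1)+O(D)=O(D)$ gives the claimed bound. The main obstacle I anticipate is layer~4: one must argue carefully that the cascading removal by Rule~4 and the bottom-up propagation of branchness do not interact pathologically with the top-down assignment of arcs, so that the total number of rounds for layer~4 is still linear in $D$ rather than in the number of arcs or in $n$. I would address this by observing that in a legitimate layer-3 configuration, once branch values stabilize, the set of arcs eventually removed by Rules 1--3 depends only on layer-1 data and branchness, and therefore fires at every affected node within a single round after branch stabilization; the Rule~4 cascade is then a single monotone wave down the L1 forest whose length is at most the depth $D$ of the forest. Combined with the layer-by-layer activation imposed by the hierarchical collateral composition, this yields the $O(D)$ convergence time stated in the lemma.
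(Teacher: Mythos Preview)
Your proposal is correct and follows essentially the same layer-by-layer accounting as the paper's own proof: $O(D)$ for each of layers~1 and~2 (BFS construction plus color propagation), $O(1)$ for layer~3, and $O(D)$ for layer~4 (one round for wrong-arc removal, $O(D)$ for branch propagation, $O(1)$ for adding arcs, and $O(D)$ for the Rule~4 cascade). Your extra discussion of why the Rule~4 wave does not interact badly with branch stabilization is more detailed than what the paper provides, but the overall decomposition and bounds are the same.
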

\begin{proof}
In layer 1, each sender node $s \in \calS$ builds a BFS tree, and this tree construction requires at most $D$ rounds for propagating correct \var{l1\_dist} values, one round for fixing \var{l1\_parent}, and at most $D$ rounds for fixing \var{l1\_color}; thus, the layer 1 algorithm requires $O(D)$ rounds in total.
Similarly, the layer 2 algorithm also requires $O(D)$ rounds.
In layer 3, nodes construct a weakly $\calS\calT$-reachable DAG in one round after stabilizing layer 2.
Finally, layer 4 removes any wrong arcs in one round, fixes \var{l4\_branch} in $O(D)$ rounds, adds missing arcs in one round, and, finally, removes redundant arcs in $O(D)$ rounds; so, the layer 4 algorithm requires $O(D)$ rounds in total.
Therefore, Algorithm \textsf{MWSTDAG} can construct a minimal weakly $\calS\calT$-reachable DAG within $O(D)$ rounds.
\end{proof}

\begin{lemma}
\label{lem:spacecomplexity}
Each node requires $O(\log D + \Delta)$ bits memory for Algorithm \textsf{MWSTDAG} where $D$ is the diameter of a given graph, and $\Delta$ is the maximum degree of the graph.
\end{lemma}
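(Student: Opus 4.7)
The plan is to prove this by enumerating every variable that a node stores across the four layers of Algorithm \textsf{MWSTDAG} and summing their individual memory costs, showing that each summand is absorbed by either $O(\log D)$ or $O(\Delta)$.

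First I would list the variables from Algorithms \ref{alg:l1algorithm}--\ref{alg:l4algorithm}. In layer~1, a node $v$ stores $\var{l1\_dist}$, $\var{l1\_parent}$, and $\var{l1\_color}$; in layer~2 it stores the analogous $\var{l2\_dist}$, $\var{l2\_parent}$, and $\var{l2\_color}$; in layer~3 it stores the boolean array $\var{l3\_arc}[\cdot]$; and in layer~4 it stores $\var{arc}[\cdot]$ and $\var{l4\_branch}$. Next I would bound each of these. The two distance variables are natural numbers whose legitimate values are at most $D$ (they represent hop-distances in BFS trees rooted at senders or red nodes); combined with a standard capping of the distance at $D+1$ to keep illegitimate configurations within the same range during stabilization, each requires $O(\log D)$ bits. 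The two parent variables take values in $N(v)\cup\{v\}$, which is encoded by a local label in $\{1,\dots,\delta_v\}\cup\{v\}$ and hence needs $O(\log \Delta)$ bits. The three boolean variables $\var{l1\_color}$, $\var{l2\_color}$, and $\var{l4\_branch}$ cost $O(1)$ bits each. Finally, the two boolean arrays $\var{l3\_arc}$ and $\var{arc}$ each have one entry per neighbor, for a total of $O(\delta_v)=O(\Delta)$ bits per array.

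Summing yields $O(\log D) + O(\log \Delta) + O(1) + O(\Delta) = O(\log D + \Delta)$ bits, using $\log \Delta \le \Delta$ to absorb the parent-label cost into the arc-array cost. This gives the stated bound.

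The only subtle step is justifying that the distance variables are $O(\log D)$ rather than, say, $O(\log n)$: in an arbitrary initial configuration the distance field could in principle hold a huge integer. I would address this by noting the standard convention (and implicit in the analysis of Lemma \ref{lem:timecomplexity}) that each node caps its stored distance at $D+1$, so illegitimate distances above this threshold are immediately corrected; alternatively, one can store the distance modulo $D+2$ without affecting the macros, since $\macro{L1CorrectDist}$ and $\macro{L2CorrectDist}$ only compare minima against $D+1$ in any reachable correct configuration. Either convention keeps the distance field within $O(\log D)$ bits and completes the bound.
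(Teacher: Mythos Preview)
Your proposal is correct and follows essentially the same approach as the paper: enumerate the variables of each layer, bound their sizes (distances by $O(\log D)$, parents by $O(\log \Delta)$, booleans by $O(1)$, and the arc arrays by $O(\Delta)$), and sum. You are actually more careful than the paper in justifying why the distance fields can be kept within $O(\log D)$ bits despite arbitrary initial configurations; the paper simply asserts that $\var{l1\_dist}$ and $\var{l2\_dist}$ need $\log D$ bits without addressing this point.
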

\begin{proof}
In the layer 1 algorithm, each node has three variables \var{l1\_dist}, \var{l1\_parent}, and \var{l1\_color}, and these variables need $\log D + \log \Delta + 1$ bits in total.
Similarly, the layer 2 algorithm requires $\log D + \log \Delta + 1$ bits.
The layer 3 and layer 4 algorithms have bit arrays of size $\Delta$, \var{l3\_arc} and \var{arc}, respectively.
The layer 4 algorithm also has a one-bit flag \var{l4\_branch}.
Therefore, the space complexity of Algorithm \textsf{MWSTDAG} is $O(\log D + \Delta)$.
\end{proof}

From Lemmas \ref{lem:l1-convergence}, \ref{lem:l2-convergence}, \ref{lem:l3-convergence}, \ref{lem:l4-convergence}, \ref{lem:silentness}, \ref{lem:timecomplexity}, and \ref{lem:spacecomplexity}, we have the following theorem finally.
\begin{theorem}
\label{thm:main-result}
Algorithm \textsf{MWSTDAG} is a silent self-stabilizing algorithm for the minimal weakly $\calS\calT$-reachable DAG construction problem.
Starting from any configuration, every fair execution of the algorithm reaches a final configuration within $O(D)$ rounds.
The algorithm requires $O(\log D + \Delta)$ bit memory for each node.
\end{theorem}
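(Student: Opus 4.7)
The plan is to derive Theorem \ref{thm:main-result} as a straightforward corollary by gluing together the lemmas already established for the individual layers, exploiting the fact that the hierarchical collateral composition guarantees the layers converge in sequence. Concretely, I would first argue convergence to a legitimate configuration of the minimal weakly $\calS\calT$-reachable DAG construction problem by stacking Lemmas \ref{lem:l1-convergence}--\ref{lem:l4-convergence}: under any fair schedule, Lemma \ref{lem:l1-convergence} guarantees that the layer 1 configuration becomes legitimate; once it is, no layer 1 action is ever enabled (this is built into the composition and also follows from the form of the L1 actions), so Lemma \ref{lem:l2-convergence} applies and the layer 2 configuration becomes legitimate; Lemma \ref{lem:l3-convergence} then yields a (possibly non-minimal) weakly $\calS\calT$-reachable DAG $\vec{G}$; and finally Lemma \ref{lem:l4-convergence} shows that the layer 4 algorithm turns $\vec{G}$ into a minimal weakly $\calS\calT$-reachable DAG. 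This chain establishes the Convergence and Correctness conditions of self-stabilization.

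For the Closure condition I would invoke Lemma \ref{lem:silentness}: since \textsf{MWSTDAG} is silent, any legitimate configuration reached by the chain above is in fact a final configuration, so no enabled action exists, and therefore the configuration cannot leave the set of legitimate configurations. Together, Convergence, Closure, and Correctness give that \textsf{MWSTDAG} is self-stabilizing for the minimal weakly $\calS\calT$-reachable DAG construction problem, and silent in the sense of the definition in Section \ref{sec:preliminaries}. The round complexity $O(D)$ follows directly from Lemma \ref{lem:timecomplexity}, and the per-node space bound $O(\log D + \Delta)$ follows directly from Lemma \ref{lem:spacecomplexity}; both can simply be cited.

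The only subtlety I would spend any care on is making the layering argument clean under asynchrony. In the hierarchical collateral composition, a layer-$i$ action is rewritten with the conjunct $\neg C_p$, where $C_p$ is the disjunction of guards of all strictly lower layers; so I must observe that in each legitimate configuration for layer $i$ guaranteed by Lemma $i$, all layer-$j$ guards for $j\le i$ are false, which is what lets Lemma $i{+}1$ take over without being invalidated by residual low-layer activity. I would therefore insert a one-line remark, per layer transition, that the legitimate configuration of the previous layer is in particular stable for that layer (an immediate inspection of the actions \textsf{L1FixDist}, \textsf{L1FixParent}, \textsf{L1FixColor}, and their layer 2, 3, 4 counterparts). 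The hard part is not in this concluding theorem itself, which is essentially a bookkeeping composition, but in the underlying Lemmas \ref{lem:l3-convergence}, \ref{lem:l4-convergence}, and \ref{lem:silentness}; once those are in hand, Theorem \ref{thm:main-result} follows in a few lines.
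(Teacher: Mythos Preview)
Your proposal is correct and mirrors the paper's own argument: the theorem is obtained exactly by combining Lemmas \ref{lem:l1-convergence}--\ref{lem:spacecomplexity}, with convergence and correctness coming from the layer-by-layer chain, closure from silence, and the complexity bounds cited directly. If anything, you spell out the gluing (why lower-layer guards stay false so the next lemma applies) more carefully than the paper, which simply states that the theorem follows from the listed lemmas.
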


\section{Evaluation}
\label{sec:evaluation}
We conduct small simulations to evaluate the performance of Algorithm \textsf{MWSTDAG}.
In this simulation, we consider a $d \times d$ grid network because we can easily control its diameter $D$.
Note that $n=d^2$, $m=2d(d-1)$, and $D=2d-2$.
We conduct the simulation with parameter $d=6,8,\dots,86$, resulting in $D=10,14,\dots,170$.
We also change the numbers of sender and target nodes, $|\calS|$ and $|\calT|$, from 5 to 15 to observe how these changes affect the total number of rounds required to reach a legitimate configuration of the algorithm.
At the beginning of each iteration of the simulation, we choose the sender and target nodes uniformly at random from the $n$ nodes without overlapping sender nodes and target nodes.
For simplicity, we assume a synchronous execution, in which every enabled node executes their action in every step.
The state of each node is randomly initialized to one of all the possible states at the beginning of each iteration to imitate a transient failure.
We run 500 iterations for each parameter setting and show the average as its result.

\begin{figure}[tp]
	\centering
	\includegraphics[width=70mm]{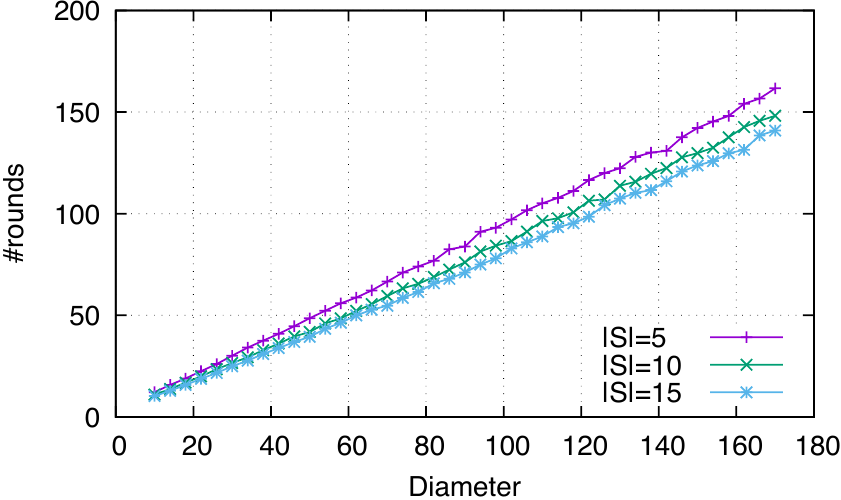}
	\caption{The total number of rounds when the size of sender nodes $|\calS|$ varied from 5 to 15.
		The size of target nodes $|\calT|$ was fixed at 10.}
	\label{fig:n-s}
\end{figure}
\begin{figure}[tp]
	\centering
	\includegraphics[width=70mm]{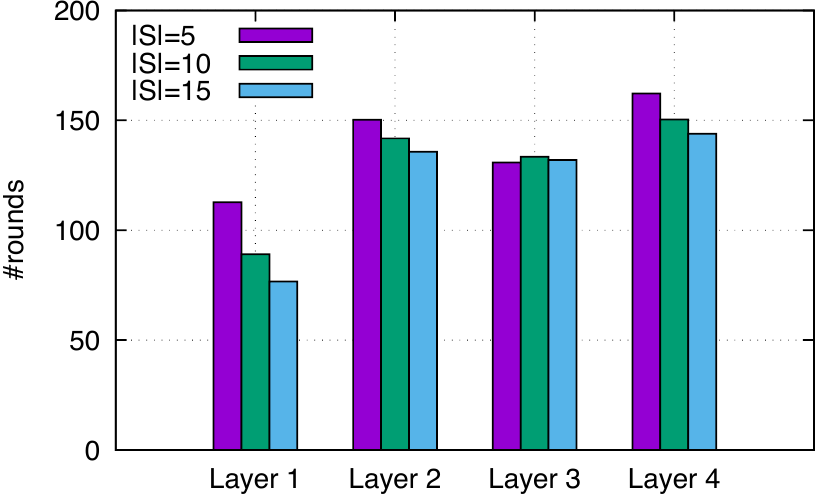}
	\caption{Running times of layers 1--4 when the size of the sender nodes $|\calS|$ varied from 5 to 15.
		The diameter and the size of target nodes $|\calT|$ were fixed at 170 and 10, respectively.}
	\label{fig:steps-varying-s}
\end{figure}

Figure \ref{fig:n-s} shows the number of rounds needed to construct a minimal weakly $\calS\calT$-reachable DAG on grid networks with $|\calS|=5,10,15$ and with a fixed $|\calT|=10$.
The results show that the total number of rounds increased as their diameter increased.
We also observed that the algorithm reached a legitimate configuration with small rounds when the size of the sender nodes was large.

To see where the difference came from, we investigated the running times of each layer.
Here, we define the running time of layer $l$ $(1 \leq l \leq 4)$ as the total number of rounds where at least one node executed layer $l$ actions\footnote{Note that actions of two or more layers may be executed in a round; thus, the sum of all running times is not equal to the total number of rounds in Fig.~\ref{fig:n-s}.}.
Figure \ref{fig:steps-varying-s} shows the running times where the diameter of a grid network was 170 (i.e., $n = (170/2 + 1)^2 = 7396$).
As we can see, the running times of each layer decreased as the size of sender nodes $|\calS|$ increased except for layer 3.
The largest difference happened in layer 1.
This layer finished 76.72 rounds on average when $|\calS| = 15$, whereas it required 112.7 rounds on average when $|\calS| = 5$.
This is because distances between target nodes and sender nodes (i.e., depths of L1 trees) get smaller when there are many sender nodes.

\begin{figure}[tp]
	\centering
	\includegraphics[width=70mm]{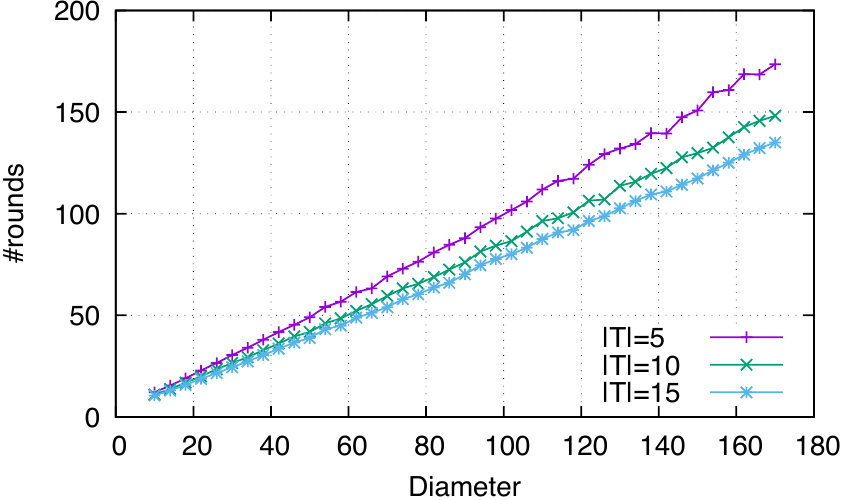}
	\caption{The total number of rounds when the size of target nodes $|\calT|$ varied from 5 to 15.
		The size of sender nodes $|\calS|$ was fixed at 10.}
	\label{fig:n-t}
\end{figure}
\begin{figure}[tp]
	\centering
	\includegraphics[width=70mm]{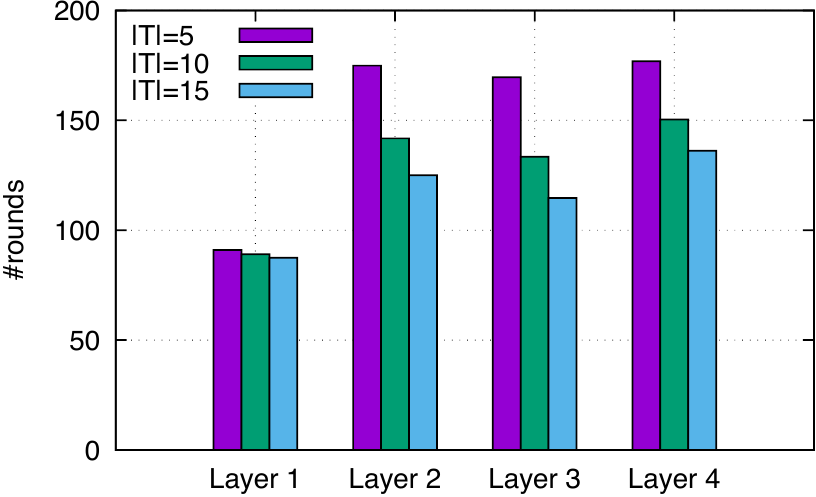}
	\caption{Running times of layers 1--4 when the size of target nodes $|\calT|$ varied from 5 to 15.
		The diameter and the size of sender nodes $|\calS|$ were fixed at 170 and 10, respectively.}
	\label{fig:steps-varying-t}
\end{figure}
Figures \ref{fig:n-t} and \ref{fig:steps-varying-t} show the results of the simulations with $|\calT|=5,10,15$ and with a fixed $|\calS|=10$.
Figure \ref{fig:n-t} shows similar trends to Fig.~\ref{fig:n-s}, but the trends of Fig.~\ref{fig:steps-varying-t} are quite different from that of Fig.~\ref{fig:steps-varying-s}.
The running times of layers 2--4 changed markedly with a different size of $\calT$, while that of layer 1 is almost independent of the size of $\calT$.
Indeed, the running times of layer 2 were 125.0 and 174.9 rounds when $|\calT| = 15$ and $|\calT| = 5$, respectively.
The reason is as follows.
If there are fewer target nodes than sender nodes, most sender nodes cannot reach target nodes and do not become red nodes in layer 1.
Therefore, the sender nodes and intermediate nodes between the senders and red nodes must execute layer 2 actions many times to construct L2 trees and to become blue nodes.
In contrast, red nodes can finish the layer 2 algorithm immediately.
This gap remained in layers 3 and 4.

Figure \ref{fig:enabled-node-transition} shows the transition of the number of enabled nodes in a representative execution where $D=170$ $(d=86)$ and $|\calS|=|\calT|=10$.
The number of the nodes enabled by layer 1 actions decreased rapidly, and the layer 1 algorithm reached its legitimate configuration at round 88.
The decrease speed of the layer 2 algorithm was slower than that of the layer 1.
The layer 2 algorithm took many rounds until reaching its legitimate configuration and terminated at round 124.
Surprisingly, the layer 3 algorithm terminated at round 106 before terminating the layer 2 algorithm.
This is not the special case of the execution, and we frequently observed this situation in other executions.
Actually, the average termination round of the layer 3 algorithm was 117.3, while that of the layer 2 algorithm was 141.7 among the 500 executions.
This was caused by nodes that do not have any arcs in layers 3 and 4.
These nodes are typically far from any sender node and any target node; thus, the correct values of their variables are propagated slowly.
Therefore, the nodes executed layer 2 actions many times based on their wrong values, and the termination of the layer 2 algorithm was late, while the termination of the layer 3 algorithm is not affected by these nodes.
\begin{figure}[tp]
	\centering
	\includegraphics[width=70mm]{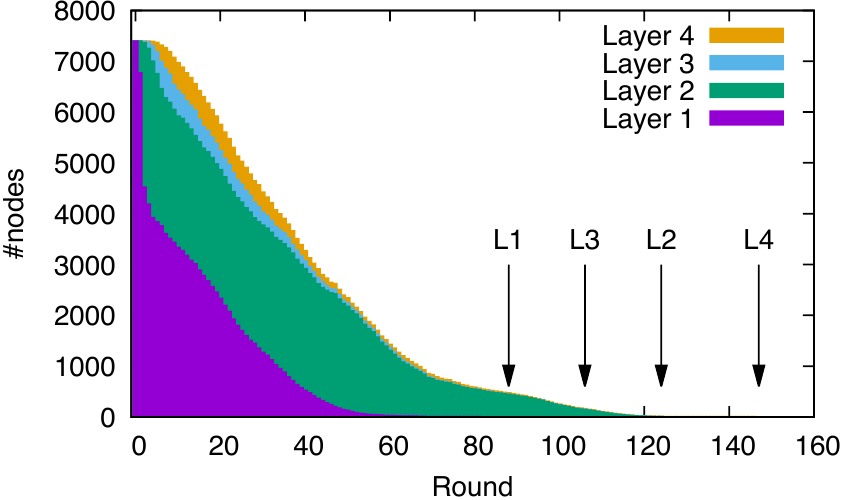}
	\caption{The transition of the number of enabled nodes.
		The arrows in the figure indicate the rounds when each layer algorithm reaches its legitimate configuration.}
	\label{fig:enabled-node-transition}
\end{figure}

\section{Conclusion}
\label{sec:conclusion}

We proposed a self-stabilizing algorithm named \textsf{MWSTDAG} that constructs a minimal weakly $\calS\calT$-reachable directed acyclic graph on a given connected undirected graph and the sets of sender nodes and target nodes, $\calS$ and $\calT$.
This graph guarantees that every sender node $s \in \calS$ can reach at least one target node in $\calT$, every target node $t \in \calT$ is reachable from at least one sender node in $\calS$, and the graph has no directed cycles while keeping the number of arcs in the graph minimal.
To the best of our knowledge, this is the first algorithm that can construct such kinds of DAGs without the restriction of the numbers of sender and target nodes.
The algorithm takes $O(D)$ asynchronous rounds and requires $O(\log D + \Delta)$ bits memory per node for the construction, where $D$ and $\Delta$ are the diameter and the maximum degree of a given graph, respectively.
We also conducted small simulations to evaluate the performance of the proposed algorithm.
The simulation results showed that the total number of rounds increases as the diameter of a network increases, and the execution time decreases when the number of sender nodes or target nodes is large.

For future work, we plan to prove the correctness of the proposed algorithm under a distributed unfair daemon.

\bibliographystyle{abbrv}
\bibliography{collection}

\end{document}